\pdfoutput=1
\documentclass[oupdraft]{bio}

\usepackage[T1]{fontenc}
\usepackage{mathtools}
\usepackage{booktabs,makecell}
\makeatletter
\providecommand\tiny{\@setfontsize\tiny\@vpt\@vipt}
\providecommand\scriptsize{\@setfontsize\scriptsize\@viipt\@viiipt}
\providecommand\large{\@setfontsize\large\@xiipt{14}}
\providecommand\Large{\@setfontsize\Large\@xivpt{18}}
\providecommand\LARGE{\@setfontsize\LARGE\@xviipt{22}}
\providecommand\huge{\@setfontsize\huge\@xxpt{25}}
\providecommand\Huge{\@setfontsize\Huge\@xxvpt{30}}
\makeatother

\usepackage{enumitem}
\bibpunct{(}{)}{;}{a}{}{,}
\usepackage[hidelinks]{hyperref}
\makeatletter
\def\ps@plain{\let\@oddfoot\@empty\let\@evenfoot\@empty\let\@oddhead\@empty\let\@evenhead\@empty\let\@mkboth\markboth}
\makeatother

\newcommand{\logit}{\operatorname{logit}}
\newcommand{\expit}{\operatorname{expit}}

\newcommand{\ESS}{\mathrm{ESS}}
\newcommand{\lnDOR}{\ln\mathrm{DOR}}
\newcommand{\SAUC}{\mathrm{SAUC}}
\newcommand{\bH}{\beta_{\mathrm{H}}}

\theoremstyle{theorem}
\newtheorem{corollary}[theorem]{\sc Corollary}
\newtheorem{proposition}[theorem]{\sc Proposition}
\theoremstyle{theorem}
\newtheorem*{thmdecomp}{\sc Proposition 2.1 of the main text (HSROC decomposition of the size trend)}
\newtheorem*{corsize}{\sc Corollary 2.2 of the main text (threshold-induced $\lnDOR$ trend), formal version}
\newtheorem*{thmcurve}{\sc Proposition 2.3 of the main text (accuracy null and invariance of the summary curve)}
\makeatletter\@addtoreset{equation}{section}\makeatother

\hypersetup{
  pdftitle={Small-study effects on the hierarchical summary ROC curve: latent accuracy and threshold trends in meta-analysis of diagnostic test accuracy},
  pdfauthor={Yuta Nonomiya and Kazuki Nishida},
  pdfkeywords={diagnostic test accuracy, funnel plot, hierarchical summary ROC, publication bias, small-study effects, threshold effect}
}

\begin{document}
\makeatletter
\g@addto@macro\@enddocumenthook{\NAT@swatrue\let\bibcite\NAT@testdef}
\makeatother

\title{Small-study effects on the hierarchical summary ROC curve: latent accuracy and threshold trends in meta-analysis of diagnostic test accuracy}

\author{YUTA NONOMIYA\\[2pt]
\textit{Graduate School of Informatics, Osaka Metropolitan University, Sakai 599-8531, Japan and German Research Center for Artificial Intelligence GmbH (DFKI), Kaiserslautern 67663, Germany}\\[6pt]
KAZUKI NISHIDA$^{\ast}$\\[2pt]
\textit{Department of Biostatistics, Nagoya University Graduate School of Medicine, 65 Tsurumai-cho, Showa-ku, Nagoya 466-8550, Japan and Department of Clinical Biostatistics, School of Public Health, Graduate School of Medicine, Kyoto University, Kyoto, Japan}\\[2pt]
{kazuki.nishida1988@gmail.com}}

\markboth{Y. Nonomiya and K. Nishida}{Small-study effects on the hierarchical summary ROC curve}

\maketitle

\footnotetext{To whom correspondence should be addressed.}

\section*{Funding}

This work was supported by the Japan Society for the Promotion of Science (JSPS) KAKENHI [grant number JP20K23222 to K.N.].

\section*{Acknowledgements}

Large language models (LLMs), specifically OpenAI's ChatGPT (GPT-6 Astra Pro) and Anthropic's Claude (Fable 5.1), were used in the preparation of this manuscript. The LLMs were not used to formulate the research question, define the estimands, develop the statistical methodology, derive the proofs, design the simulation study, or interpret the results. Their use was limited to assisting with checking the authors' proofs for errors, generating, debugging and documenting code, searching the literature and locating source data, verifying references, and English-language editing. They also assisted with extracting study-level $2\times2$ tables from published reports and reconstructing tables for the IPG review (Section~\ref{sec:appdata}). The authors compiled the reference list using the DOIs and URLs of the cited works and verified every entry against the original source. Both authors independently checked each extracted table against its published source and each reconstructed table for consistency with the published sensitivities, specificities and confidence intervals. All LLM outputs were reviewed and verified by the authors, who take full responsibility for the content of the manuscript.

\section*{Conflict of Interest}

None declared.

\clearpage

\begin{abstract}
{Diagnostic meta-analyses commonly assess small-study effects using the Deeks test, which tests for a study-size trend in the log diagnostic odds ratio. Under the hierarchical summary receiver operating characteristic (HSROC) model, this quantity reflects both latent accuracy and threshold unless the curve is symmetric. A threshold trend can therefore generate a log diagnostic odds ratio trend without a latent accuracy trend, or conceal an existing one. We develop a likelihood-ratio test of the latent accuracy trend under a common-shape binomial HSROC model, leaving the threshold trend unrestricted. Its null hypothesis states that the summary curve is unchanged across study sizes. A log diagnostic odds ratio test using the same binomial fit provides a comparison with a different null hypothesis. In simulations, the proposed test remained near its nominal level under its null and had higher power than Deeks with a latent accuracy trend alone. In a colorectal cancer review of faecal immunochemical tests, it detected a negative latent accuracy trend obscured on the log diagnostic odds ratio scale by an opposing threshold contribution. All three tests detected size trends for impedance plethysmography in deep-vein thrombosis, where the fitted curve was nearly symmetric. The proposed test distinguishes changes in the summary curve from changes in operating position.}
{Diagnostic test accuracy; Funnel plot; Hierarchical summary ROC; Publication bias; Small-study effects; Threshold effect.}
\end{abstract}

\section{Introduction}\label{sec:intro}

Clinicians need to know whether a diagnostic test is accurate enough to guide patient care. This judgement often requires evidence from several studies. Sensitivity and specificity describe test performance at a given threshold. They are estimated from each study's $2\times2$ table of test results against a reference standard. To summarize these measures across study populations and settings while accounting for between-study heterogeneity, meta-analyses of diagnostic test accuracy (DTA) commonly use hierarchical models \citep{reitsma2005,cochraneDTA2023}.

Sensitivity and specificity also need to be interpreted together. Changing the threshold can increase one while decreasing the other, moving the operating point along a receiver operating characteristic (ROC) curve. Differences between studies may therefore reflect differences in the curve itself or in operating position along it. The hierarchical summary ROC (HSROC) model represents this distinction through latent accuracy and latent threshold \citep{rutter2001}. For a given curve shape, latent accuracy determines the position of the summary curve, whereas latent threshold determines the operating position along it.

A further concern in DTA meta-analysis is \emph{small-study effects}: systematic differences in reported results between smaller and larger studies \citep{egger1997}. The Deeks test is commonly used to assess small-study effects in DTA meta-analysis. It regresses each study's log diagnostic odds ratio ($\lnDOR$) on the inverse square root of its effective sample size. The corresponding Deeks funnel plot displays $\lnDOR$ against this measure of study size \citep{deeks2005,cochraneDTA2023}.

The diagnostic odds ratio, however, can depend on threshold \citep{buerkner2014}. Under the HSROC model, $\lnDOR$ reflects latent accuracy alone only when the summary curve is symmetric. Otherwise, it also contains a latent threshold component. As we show below, differences in latent threshold between smaller and larger studies can produce an association between $\lnDOR$ and study size even when latent accuracy is unchanged. Threshold differences can also mask small-study effects in latent accuracy when the two contributions oppose each other. A test based on $\lnDOR$ therefore does not, in general, answer whether the summary ROC curve differs across study sizes.

We therefore assess small-study effects in HSROC latent accuracy by testing the null hypothesis of no trend with study size, leaving the latent threshold trend unrestricted. Under a common-shape model, this null states that the same summary ROC curve applies across study sizes. We derive the relationship between the latent accuracy null and the $\lnDOR$ null. We use the binomial likelihood of the HSROC model to construct a likelihood-ratio test. We compare the proposed test with the conventional Deeks test and with an $\lnDOR$ test using the same binomial fit.

This paper is organized as follows. Section~\ref{sec:methods} presents the model, hypotheses and testing procedures. Section~\ref{sec:sim} compares these three tests in simulations, and Section~\ref{sec:apps} applies them to two diagnostic reviews. Section~\ref{sec:disc} discusses the implications and limitations of the proposed approach.

\section{Model, hypotheses and test}\label{sec:methods}

\subsection{The HSROC model with study size}\label{sec:model}

Study $i$ ($i=1,\dots,k$) contributes a $2\times2$ table: $\mathrm{TP}_i$ of $n_{1i}$ diseased participants test positive, and $\mathrm{FP}_i$ of $n_{0i}$ non-diseased participants test positive. Let $\eta_i=\logit(\text{sensitivity}_i)$ and $\varphi_i=\logit(\text{false-positive rate}_i)$, so that
\begin{equation}\label{eq:binom}
\mathrm{TP}_i\sim\mathrm{Bin}\{n_{1i},\expit(\eta_i)\},\qquad \mathrm{FP}_i\sim\mathrm{Bin}\{n_{0i},\expit(\varphi_i)\}.
\end{equation}
Study size is represented by the effective sample size of \citet{deeks2005}, $\ESS_i=4n_{1i}n_{0i}/(n_{1i}+n_{0i})$. We set $s_i=1/\sqrt{\ESS_i}$ and $x_i=s_i-\bar s$. A study-size trend denotes a slope with respect to $s_i$, or equivalently $x_i$; larger values of $s_i$ correspond to smaller studies.

We allow such trends in both latent components of the HSROC model \citep{rutter2001}:
\begin{equation}\label{eq:hsroc}
\eta_i=e^{-\bH/2}\Bigl(\theta_i+\frac{\alpha_i}{2}\Bigr),\quad
\varphi_i=e^{\bH/2}\Bigl(\theta_i-\frac{\alpha_i}{2}\Bigr),\quad
\theta_i\sim N(\Theta+\gamma_\theta x_i,\sigma_\theta^2)\perp\alpha_i\sim N(\Lambda+\gamma_\alpha x_i,\sigma_\alpha^2),
\end{equation}
where $\theta_i$ is latent threshold, $\alpha_i$ is latent accuracy, and $\gamma_\theta$ and $\gamma_\alpha$ are their study-size trends. The latent residuals are independent conditional on study size. Their variances and the shape parameter $\bH$ are constant across study sizes. Write
\begin{equation}\label{eq:lambda}
\lambda=e^{-\bH},
\end{equation}
so that at latent accuracy $\alpha$ the ROC curve is $\eta=\lambda\varphi+\lambda^{1/2}\alpha$ on the logit scale. It is symmetric if and only if $\lambda=1$.

To assess small-study effects in latent accuracy, we test the \emph{accuracy null}
\begin{equation}\label{eq:null}
H_0^{\alpha}\colon\ \gamma_\alpha=0,\qquad \gamma_\theta\ \text{unrestricted}.
\end{equation}
Accuracy and threshold refer throughout to the components of the common-shape model \eqref{eq:hsroc}.

The same model has a bivariate parametrization \citep{reitsma2005,chu2006}. Conditional on study size, $(\eta_i,\varphi_i)$ is bivariate normal with means $\mu_\eta+\beta_\eta x_i$ and $\mu_\varphi+\beta_\varphi x_i$, residual standard deviations $\sigma_\eta,\sigma_\varphi$ and residual correlation $\rho$. Under the correspondence of \citet{harbord2007}, $\lambda=\sigma_\eta/\sigma_\varphi$, and the two parametrizations yield the same likelihood. We call $(\beta_\eta,\beta_\varphi)$ the \emph{trend vector}.

We compare the $\lnDOR$ coordinate $\eta_i-\varphi_i$ with the accuracy coordinate $\eta_i-\lambda\varphi_i$, which eliminates latent threshold. To express both, write $a_i(c)=\eta_i-c\varphi_i$, $c\ge0$, with study-size trend $\beta(c)=\beta_\eta-c\beta_\varphi$. The choices $c=1$ and $c=\lambda$ give the two coordinates. Geometrically, $a_i(c)$ is constant along the direction $(1,c)$ in the $(\varphi,\eta)$ plane.

\subsection{What each procedure tests}\label{sec:map}

The two contrasts give the null hypotheses $\beta_\eta-\beta_\varphi=0$ for $\lnDOR$ and $\beta_\eta-\lambda\beta_\varphi=0$ for latent accuracy. The following decomposition shows how latent threshold contributes to each contrast.

\begin{proposition}[HSROC decomposition of the size trend]\label{thm:decomp}
Under \eqref{eq:hsroc}, with $\lambda$ as in \eqref{eq:lambda}:
\begin{enumerate}[label=(\roman*)]
\item \emph{(Directions and exact elimination.)} A change of $\theta_i$ moves a study along $(1,\lambda)$ in the $(\varphi,\eta)$ plane and a change of $\alpha_i$ along $(-1,\lambda)$. For every study,
\begin{equation}\label{eq:elim}
\eta_i-\lambda\varphi_i=\lambda^{1/2}\alpha_i,\qquad \eta_i+\lambda\varphi_i=2\lambda^{1/2}\theta_i:
\end{equation}
the contrast $\eta_i-\lambda\varphi_i$ depends only on latent accuracy, whereas $\eta_i+\lambda\varphi_i$ depends only on latent threshold.
\item \emph{(Trends.)} $\beta_\eta=\lambda^{1/2}(\gamma_\theta+\gamma_\alpha/2)$ and $\beta_\varphi=\lambda^{-1/2}(\gamma_\theta-\gamma_\alpha/2)$; conversely $\gamma_\alpha=\lambda^{-1/2}(\beta_\eta-\lambda\beta_\varphi)$ and $\gamma_\theta=\tfrac12(\lambda^{-1/2}\beta_\eta+\lambda^{1/2}\beta_\varphi)$. The accuracy null \eqref{eq:null} is $\beta_\eta=\lambda\beta_\varphi$.
\item \emph{(Mixing.)} For every $c\ge0$,
\begin{equation}\label{eq:mix}
\beta(c)=\lambda^{-1/2}\Bigl[(\lambda-c)\,\gamma_\theta+\frac{\lambda+c}{2}\,\gamma_\alpha\Bigr],
\end{equation}
which is free of $\gamma_\theta$ for every value of $\gamma_\theta$ iff $c=\lambda$, where $\beta(\lambda)=\lambda^{1/2}\gamma_\alpha$.
\item \emph{(The symmetric case.)} The $\lnDOR$ axis carries $\gamma_\theta$ with weight $\lambda^{-1/2}(\lambda-1)$, which vanishes iff $\bH=0$. When $\bH=0$, $\alpha_i=\eta_i-\varphi_i=\lnDOR_i$ and $\theta_i=(\eta_i+\varphi_i)/2$ exactly, so the true $\lnDOR$ is the latent accuracy and its size trend is $\gamma_\alpha$.
\end{enumerate}
\end{proposition}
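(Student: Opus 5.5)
The whole proposition is linear algebra on the map \eqref{eq:hsroc}, so the plan is to verify each part by direct substitution, in the order stated. Write $\lambda^{1/2}=e^{-\bH/2}$, so that $e^{\bH/2}=\lambda^{-1/2}$. Then \eqref{eq:hsroc} reads $\eta_i=\lambda^{1/2}(\theta_i+\alpha_i/2)$ and $\varphi_i=\lambda^{-1/2}(\theta_i-\alpha_i/2)$.

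For (i), differentiate with respect to each latent coordinate. A unit change in $\theta_i$ changes $(\varphi_i,\eta_i)$ by $(\lambda^{-1/2},\lambda^{1/2})=\lambda^{-1/2}(1,\lambda)$. A unit change in $\alpha_i$ changes it by $\tfrac12(-\lambda^{-1/2},\lambda^{1/2})=\tfrac12\lambda^{-1/2}(-1,\lambda)$. This gives the two directions. Multiplying $\varphi_i$ by $\lambda$ gives $\lambda\varphi_i=\lambda^{1/2}(\theta_i-\alpha_i/2)$. Subtracting this from $\eta_i$ and adding it to $\eta_i$ yields \eqref{eq:elim} immediately.

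For (ii), take conditional expectations given $x_i$. Linearity and $E(\theta_i\mid x_i)=\Theta+\gamma_\theta x_i$, $E(\alpha_i\mid x_i)=\Lambda+\gamma_\alpha x_i$ give $\mu_\eta+\beta_\eta x_i=\lambda^{1/2}\{\Theta+\Lambda/2+(\gamma_\theta+\gamma_\alpha/2)x_i\}$. The same holds for $\varphi_i$. Matching the coefficients of $x_i$ gives the two forward formulas; this step uses the fact that $x_i$ is nonconstant across studies, or equivalently that the trend is defined as a slope. The inverse formulas follow by taking expectations of the two identities in \eqref{eq:elim}. Alternatively, one can solve the $2\times2$ linear system, whose determinant is nonzero. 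Since $\lambda^{-1/2}>0$, the relation $\gamma_\alpha=0$ is equivalent to $\beta_\eta=\lambda\beta_\varphi$. This is the restatement of \eqref{eq:null}.

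For (iii), substitute (ii) into $\beta(c)=\beta_\eta-c\beta_\varphi$ to get $\lambda^{1/2}(\gamma_\theta+\gamma_\alpha/2)-c\lambda^{-1/2}(\gamma_\theta-\gamma_\alpha/2)$. Factoring out $\lambda^{-1/2}$ gives \eqref{eq:mix}. The expression is affine in $\gamma_\theta$ with coefficient $\lambda^{-1/2}(\lambda-c)$. It is therefore free of $\gamma_\theta$ for all values of $\gamma_\theta$ iff that coefficient vanishes, that is, iff $c=\lambda$. At $c=\lambda$ the remaining term is $\lambda^{-1/2}\lambda\gamma_\alpha=\lambda^{1/2}\gamma_\alpha$.

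Part (iv) is the specialization $c=1$. The weight $\lambda^{-1/2}(\lambda-1)$ vanishes iff $\lambda=1$, and since $\lambda=e^{-\bH}$ this holds iff $\bH=0$. With $\lambda=1$, \eqref{eq:elim} gives $\alpha_i=\eta_i-\varphi_i$ and $\theta_i=(\eta_i+\varphi_i)/2$. The true $\lnDOR_i=\eta_i-\varphi_i$ then has trend $\beta(1)=\gamma_\alpha$.

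No step is a genuine obstacle. The only point needing care is the bookkeeping of the factors $\lambda^{\pm1/2}$, where a sign slip in $\bH$ is easy. The other point to state explicitly is the interpretation of "free of $\gamma_\theta$ for every value of $\gamma_\theta$": as a coefficient condition, this requires only $\lambda>0$, which holds automatically.
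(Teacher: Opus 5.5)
Your proposal is correct and follows the paper's proof essentially step for step. You differentiate \eqref{eq:hsroc} for the directions, use $\lambda\varphi_i=\lambda^{1/2}(\theta_i-\alpha_i/2)$ for the elimination identities, take conditional expectations given $x_i$ for the trends, and read off the coefficient $\lambda^{-1/2}(\lambda-c)$ of $\gamma_\theta$ in $\beta(c)$ for (iii) and (iv). The differences are only cosmetic: you obtain the $\bH=0$ identities from \eqref{eq:elim} at $\lambda=1$ rather than by setting $e^{\pm\bH/2}=1$ in \eqref{eq:hsroc}, and you state explicitly the harmless requirement that $x_i$ be nonconstant when matching slopes.
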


Proofs are in Section~A of the supplementary material. Under model \eqref{eq:hsroc}, the coordinate $\eta_i-\lambda\varphi_i$ eliminates the latent threshold for each study. The $\lnDOR$ coordinate makes the same elimination only when $\lambda=1$.

\begin{corollary}[Threshold-induced $\lnDOR$ trend]\label{cor:size}
Under a pure threshold trend, $\gamma_\alpha=0$ and $\gamma_\theta\neq0$ with $\bH\neq0$, the true $\lnDOR$ has the nonzero size trend $\beta(1)=\lambda^{-1/2}(\lambda-1)\gamma_\theta$.
\end{corollary}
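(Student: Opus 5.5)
The plan is to read the corollary directly off the mixing identity \eqref{eq:mix} of Proposition~\ref{thm:decomp}(iii), evaluated at $c=1$. Since $a_i(1)=\eta_i-\varphi_i$ is the true $\lnDOR_i$, its study-size trend is $\beta(1)=\beta_\eta-\beta_\varphi$.

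The main step is the substitution $c=1$ and $\gamma_\alpha=0$ into \eqref{eq:mix}. This gives $\beta(1)=\lambda^{-1/2}(\lambda-1)\gamma_\theta$, because the $\gamma_\alpha$ term carries the factor $(\lambda+1)/2$ multiplied by zero. For completeness I would also verify this from part (ii), which supplies $\beta_\eta=\lambda^{1/2}\gamma_\theta$ and $\beta_\varphi=\lambda^{-1/2}\gamma_\theta$ when $\gamma_\alpha=0$. Subtracting gives $\gamma_\theta(\lambda^{1/2}-\lambda^{-1/2})=\lambda^{-1/2}(\lambda-1)\gamma_\theta$, which matches.

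It remains to show that this trend is nonzero. The factor $\lambda^{-1/2}$ is positive and finite because $\lambda=e^{-\bH}>0$. The factor $\lambda-1$ vanishes iff $e^{-\bH}=1$, that is iff $\bH=0$, by injectivity of the exponential; this is the observation already recorded in Proposition~\ref{thm:decomp}(iv). So under $\bH\neq0$ and $\gamma_\theta\neq0$ the product of the three factors is nonzero.

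I do not expect a genuine obstacle. The one point that needs care is interpretive: ``size trend of the true $\lnDOR$'' should be read as the slope of $E(\eta_i-\varphi_i\mid x_i)$ in $x_i$. Under \eqref{eq:hsroc} this slope equals $\beta_\eta-\beta_\varphi$, because the conditional means are linear in $x_i$ with slopes $\beta_\eta$ and $\beta_\varphi$. This holds as in the bivariate parametrization of Section~\ref{sec:model}. With that identification, the corollary follows by the arithmetic above.
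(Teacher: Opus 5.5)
Your proof is correct and follows the paper's argument: the paper also evaluates the mixing identity of Proposition~\ref{thm:decomp}(iii) at $c=1$ and $\gamma_\alpha=0$, after noting that $E\{a_i(1)\mid s_i\}$ is exactly linear in $x_i$ with slope $\beta_\eta-\beta_\varphi$. The supplementary version goes further. It uses that exact linearity to argue that any weighted least-squares slope functional, such as the Deeks regression slope, also equals $\beta(1)$, which yields its additional claim that a consistent $t$-type test rejects with probability tending to one; this lies beyond the statement you were asked to prove.
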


In this setting, the $\lnDOR$ null is false even though the accuracy null holds. A two-sided $t$-type test based on a consistent estimator of $\beta(1)$ with estimated standard error $O_p(k^{-1/2})$ rejects the $\lnDOR$ null with probability tending to one as $k$ grows (Section~A of the supplementary material). We evaluate the corresponding rejection behaviour of the conventional Deeks test in Section~\ref{sec:sim}.

The sign of the induced trend is $\operatorname{sign}\{(\lambda-1)\gamma_\theta\}$: with $\lambda<1$ and smaller studies at more liberal operating positions ($\gamma_\theta>0$), small studies display \emph{lower} $\lnDOR$ despite unchanged latent accuracy. Such a reverse pattern can arise from a latent threshold trend without selective publication.

Table~\ref{tab:map} compares the null hypotheses and test procedures. Under model \eqref{eq:hsroc}, the $\lnDOR$ null corresponding to the Deeks assessment is $\beta_\eta-\beta_\varphi=0$. The accuracy null leaves the latent threshold trend unrestricted. The two null hypotheses coincide when $\lambda=1$; otherwise, a threshold trend can make them differ.

We also test the $\lnDOR$ null using a Wald contrast from the binomial fit employed by the proposed likelihood-ratio test. This comparison retains the $\lnDOR$ null hypothesis while replacing the empirical-logit regression with the binomial likelihood. Comparing the two null hypotheses within that fit then shows what changes when the assessment concerns latent accuracy.

\begin{table}[htb]
\centering
\caption{Null hypotheses and implementation of the three tests under model \eqref{eq:hsroc}. The shape is $\lambda=\exp(-\bH)$, and the logit and latent trend coefficients are related by Proposition~\ref{thm:decomp}. The accuracy null leaves the latent threshold trend unrestricted. Estimation and reference distributions are specified in Section~\ref{sec:inference}.}
\label{tab:map}
\footnotesize
\setlength{\tabcolsep}{6pt}
\begin{tabular}{@{}>{\raggedright\arraybackslash}p{90pt} >{\raggedright\arraybackslash}p{145pt} >{\raggedright\arraybackslash}p{155pt}@{}}
\toprule
Procedure & Null hypothesis in model \eqref{eq:hsroc} & Estimation and test \\
\midrule
Deeks test & $\beta_\eta-\beta_\varphi=0$ & Empirical-logit weighted regression; $t$ reference \\
\addlinespace[4pt]
Binomial-fit $\lnDOR$ test & $\beta_\eta-\beta_\varphi=0$ & Binomial maximum likelihood; Wald contrast \\
\addlinespace[4pt]
Proposed test & $\beta_\eta-\lambda\beta_\varphi=0$, equivalently $\gamma_\alpha=0$ & Binomial maximum likelihood; likelihood-ratio test \\
\bottomrule
\end{tabular}
\end{table}

\subsection{What the accuracy null means for the summary curve}\label{sec:curve}

Write $\alpha(x)=\Lambda+\gamma_\alpha x$ for the mean latent accuracy at size $x$ and
\begin{equation}\label{eq:curve}
S_x(u)=\expit\bigl\{\lambda\logit(u)+\lambda^{1/2}\alpha(x)\bigr\},\qquad u\in(0,1),
\end{equation}
for the summary curve evaluated at that mean latent accuracy, with sensitivity expressed as a function of the false-positive rate $u$.

\begin{proposition}[Accuracy null and invariance of the summary curve]\label{thm:curve}
Under \eqref{eq:hsroc}, with $\bH$ not depending on $x$, for every $u\in(0,1)$
\begin{equation}\label{eq:shift}
\frac{\partial S_x(u)}{\partial x}=\lambda^{1/2}\gamma_\alpha\,S_x(u)\{1-S_x(u)\}.
\end{equation}
Consequently the accuracy null \eqref{eq:null} holds iff the summary curve is the same for studies of all sizes, and under the alternative the sign of $\gamma_\alpha$ is the sign of the size trend of the sensitivity at every false-positive rate, and of the area under the summary curve.
\end{proposition}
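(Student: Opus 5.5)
The plan is a direct chain-rule computation followed by reading off the three consequences. Fix $u\in(0,1)$ and write $g_x(u)=\lambda\logit(u)+\lambda^{1/2}\alpha(x)$, so that $S_x(u)=\expit\{g_x(u)\}$. Since $\bH$, and hence $\lambda$, does not depend on $x$, the only $x$-dependence enters through $\alpha(x)=\Lambda+\gamma_\alpha x$, giving $\partial g_x(u)/\partial x=\lambda^{1/2}\gamma_\alpha$. Because $\expit'(z)=\expit(z)\{1-\expit(z)\}$, the chain rule gives \eqref{eq:shift} at once. Equivalently, on the logit scale $\logit S_x(u)=g_x(u)$ is affine in $x$ with slope $\lambda^{1/2}\gamma_\alpha$, uniformly in $u$. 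I will use this form to avoid working with derivatives of nonlinear functions in what follows.

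For the equivalence, note that the summary curve is the same for all sizes exactly when $S_x(u)$ does not depend on $x$ for every $u$. If $\gamma_\alpha=0$, the right side of \eqref{eq:shift} vanishes identically, so $S_x(u)=S_0(u)$ for all $x$ and $u$. Conversely, suppose $S_x(u)$ is constant in $x$ at even a single $u$. Then its logit $g_x(u)$ is constant in $x$, which forces $\lambda^{1/2}\gamma_\alpha=0$. Since $\lambda^{1/2}=e^{-\bH/2}>0$, this gives $\gamma_\alpha=0$. The threshold trend $\gamma_\theta$ never enters $S_x$, consistent with Proposition~\ref{thm:decomp}(i)--(iii): threshold moves studies along the curve, along direction $(1,\lambda)$, and does not move the curve itself. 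This matches the nuisance status of $\gamma_\theta$ in \eqref{eq:null}.

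For the sign statements, under the alternative $\gamma_\alpha\neq0$, and $S_x(u)\{1-S_x(u)\}>0$ because $S_x(u)\in(0,1)$. Together with $\lambda^{1/2}>0$, \eqref{eq:shift} shows that $\partial S_x(u)/\partial x$ has the sign of $\gamma_\alpha$ at every $u$; this is the size trend of sensitivity at fixed false-positive rate. For the area, write $\SAUC(x)=\int_0^1 S_x(u)\,du$ and differentiate under the integral sign to obtain
\[
\SAUC'(x)=\lambda^{1/2}\gamma_\alpha\int_0^1 S_x(u)\{1-S_x(u)\}\,du .
\]
The integral is strictly positive, so $\SAUC'(x)$ has the sign of $\gamma_\alpha$.

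The only step needing care is the justification for differentiating under the integral. The integrand is bounded by $\lambda^{1/2}|\gamma_\alpha|/4$ uniformly in $(x,u)$, so dominated convergence (or the mean value theorem with this bound) applies. Alternatively, one can avoid differentiation entirely: for $x_1<x_2$ and each $u$, the monotonicity of $\expit$ gives $S_{x_2}(u)-S_{x_1}(u)$ the sign of $\gamma_\alpha$ strictly, and integrating yields the strict monotonicity of the area. I would present this second, differentiation-free argument as the primary one.
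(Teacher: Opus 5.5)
Your proposal is correct and follows essentially the same route as the paper. Both write $\logit S_x(u)=\lambda\logit u+\lambda^{1/2}\alpha(x)$ and differentiate in $x$ to get \eqref{eq:shift}; both use $S_x(u)\{1-S_x(u)\}>0$ for the pointwise sign and for the equivalence, since a nonzero $\gamma_\alpha$ moves the curve at every $u$; and both integrate a one-signed quantity to get the statement about the area. Your differentiation-free comparison of $S_{x_2}(u)-S_{x_1}(u)$ for the area is simply a more explicit version of the paper's ``integrating a function of one sign over $(0,1)$'' step.
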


The accuracy null therefore asks whether one summary curve applies across study sizes. A threshold trend changes the mean operating position without changing that curve; an accuracy trend shifts the curve at every false-positive rate. The proof is in Section~A of the supplementary material.

\subsection{The test}\label{sec:inference}

We fit the seven-parameter bivariate binomial meta-regression with study size on both logits, using maximum likelihood and adaptive Gauss--Hermite quadrature (\citealp{chu2006}; Section~B of the supplementary material). The binomial likelihood requires no continuity correction. The fitted shape is $\hat\lambda=\hat\sigma_\eta/\hat\sigma_\varphi$, and the latent trends follow from Proposition~\ref{thm:decomp}(ii). The equivalent HSROC parametrization gives the same likelihood \citep{harbord2007}.

The accuracy null imposes $\beta_\eta=\lambda\beta_\varphi$. We maximize the constrained likelihood with $\beta_\eta=\lambda^{1/2}\gamma_\theta$ and $\beta_\varphi=\lambda^{-1/2}\gamma_\theta$, re-estimating the shape and the remaining parameters. Let $\widehat\ell$ and $\widehat\ell_0$ denote the full and constrained maximum log-likelihoods. The likelihood-ratio statistic and the two-sided $p$ value used here are
\[
D=2(\widehat\ell-\widehat\ell_0),\qquad
p=2\bigl\{1-F_{t_{k-2}}(\sqrt D)\bigr\},
\]
where $F_{t_{k-2}}$ is the distribution function of a $t$ distribution with $k-2$ degrees of freedom. The asymptotic reference for $D$ is $\chi^2_1$; the $t_{k-2}$ reference for its signed square root is a finite-sample convention, evaluated against the $\chi^2_1$ reference in Section~C.2 of the supplementary material. The threshold null is tested analogously.

Standard errors of the latent trends use the delta method and the full covariance matrix, including shape uncertainty and its covariance with the trend estimates; intervals use $t_{k-2}$. The binomial-fit $\lnDOR$ test uses the Wald contrast $\hat\beta_\eta-\hat\beta_\varphi$ with the same $t$ reference. An auxiliary Wald test of the accuracy coordinate is reported in Section~C.2 of the supplementary material to examine the role of the test statistic.

For the Deeks test and funnel plots, we calculate empirical logits after adding $0.5$ to all four cells of every study's table \citep{kulinskaya2015}. The conventional Deeks test regresses the resulting empirical $\lnDOR$ on $s_i$, with weights $\ESS_i$ \citep{deeks2005}.

Alongside the Deeks funnel, we plot the empirical \emph{accuracy coordinate} $\hat\eta_i-\hat\lambda(\hat\varphi_i-\hat\mu_\varphi)$, where $\hat\eta_i$ and $\hat\varphi_i$ are the study's empirical logits and $\hat\mu_\varphi$ is the fitted mean logit false-positive rate. This moves each observed point along the fitted threshold direction to the common false-positive rate $\expit(\hat\mu_\varphi)$. The overlaid model trend is $\hat\beta_\eta-\hat\lambda\hat\beta_\varphi=\hat\lambda^{1/2}\hat\gamma_\alpha$ per unit of $s_i$. The plotted points retain sampling variation. At $\hat\lambda=1$, the two plotted coordinates coincide up to a constant.

\section{Simulation study}\label{sec:sim}

\subsection{Design}\label{sec:simdesign}

We used simulations to assess rejection under the accuracy null and detection of latent accuracy trends, with and without a concurrent latent threshold trend. We considered four scenarios: neither trend, a threshold trend alone, an accuracy trend alone, and both trends.

Data were generated from model \eqref{eq:hsroc}, with baseline $k=30$, residual correlation $\rho=0.4$ and shapes $\lambda=1/4$, $1/2$, $1/\sqrt2$, $1$, $\sqrt2$, $2$ and $4$. The mean operating point was fixed at logit sensitivity $1$ and logit false-positive rate $-2$. We set $\sigma_\eta\sigma_\varphi=0.75^2$ and prevalence to $0.35$. Total study sizes were log-normal with median $300$ and clipped to $[40,4000]$. The generator is specified in Section~C.1 of the supplementary material.

All four scenarios were evaluated at each shape. The nonzero accuracy trend was $0.5$ per standard deviation of $s_i$, with smaller studies more accurate. The nonzero threshold trend was $0.4$ residual standard deviations per standard deviation of $s_i$, with smaller studies at more liberal positions. A further comparison fixed $\lambda=1/2$ and varied the threshold-trend strength from $0$ to $1.6$ in steps of $0.2$, with no accuracy trend.

Additional comparisons varied $k=10$, $20$, $30$ and $50$ at $\rho=0.4$, and $\rho=0$, $0.4$ and $0.8$ at $k=30$. These used $\lambda=1/2$, $1$ and $2$ under neither trend, a threshold trend alone and an accuracy trend alone. The latent accuracy change was held fixed across $\rho$. The design contains 35 baseline and 45 additional settings, with 1000 replicates per setting and an independent seed for each setting.

Each replicate was analysed by the Deeks test, the binomial-fit $\lnDOR$ test and the proposed latent-accuracy likelihood-ratio (LR) test, as defined in Section~\ref{sec:inference}. All tests were evaluated at the nominal level $0.10$. Rates were calculated from valid $p$ values; optimizer flags were retained, and computational diagnostics are reported for every setting in Section~C.4 of the supplementary material.

We report rejection rates for the respective null hypotheses. For the $\lnDOR$ tests, rejection when $\gamma_\alpha=0$ need not be a type I error: a pure threshold trend makes their own null false when $\lambda\neq1$. The accuracy-trend scenarios show how detection depends on the presence of the threshold trend.

\subsection{Results}\label{sec:simresults}

\begin{figure}[p]
\centering
\includegraphics[width=\textwidth]{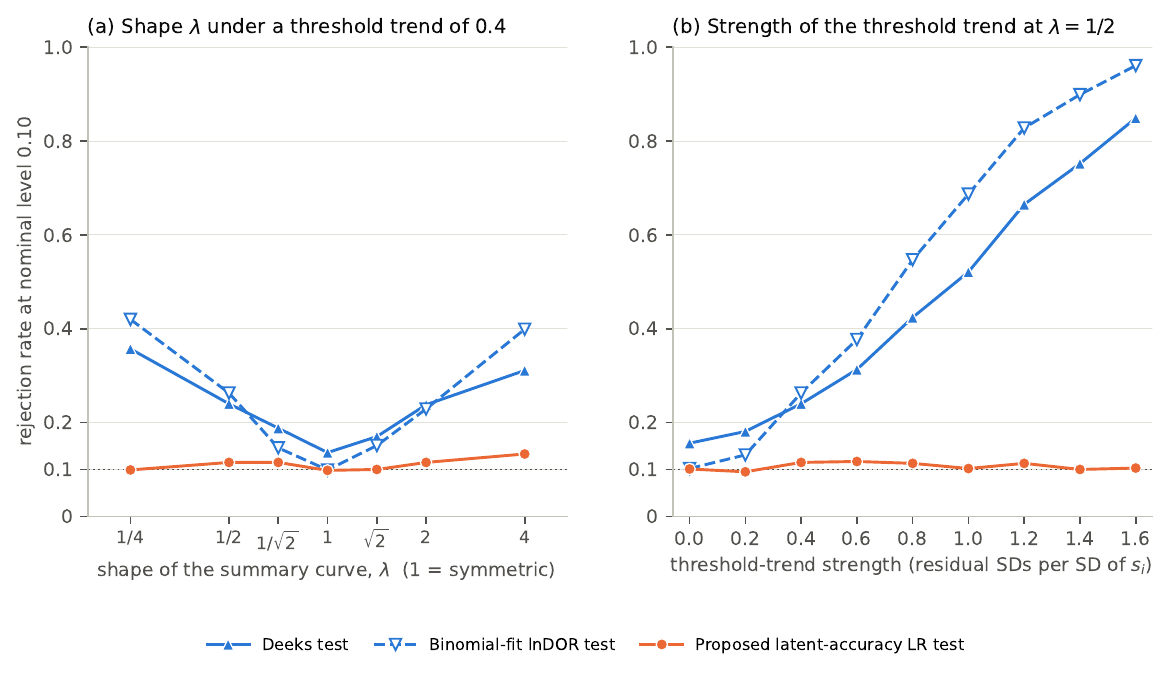}
\caption{Rejection under the accuracy null ($\gamma_\alpha=0$), with $k=30$ and residual correlation $\rho=0.4$. (a)~Shape $\lambda$ varies under a threshold trend of $0.4$ residual standard deviations per standard deviation of $s_i$. (b)~Threshold-trend strength varies at $\lambda=1/2$. Each point uses 1000 replicates; the dotted line marks the nominal level $0.10$. Blue: Deeks test (solid) and the binomial-fit $\lnDOR$ test (dashed); orange: proposed test. Generating values and rates are in Tables~S1 and S2 of the supplementary material.}
\label{fig:sim}
\par\vspace{0.3em}\begin{flushleft}\footnotesize Alt text: Two panels show rejection rates when latent accuracy has no size trend. In the left panel, the two blue ln DOR curves are lowest near shape one and rise toward either end. In the right panel, both blue curves rise as the threshold trend strengthens. The orange accuracy-test curve stays close to the nominal level in both panels.\end{flushleft}
\end{figure}

With neither trend present, the two binomial procedures had rejection rates closer to the nominal level than the Deeks test. The Deeks test rejected in $0.137$ to $0.157$ of the replicates, compared with $0.098$ to $0.124$ for the binomial-fit $\lnDOR$ test and $0.101$ to $0.121$ for the proposed test (Table~S1 of the supplementary material).

A threshold trend alone increased rejection by the two $\lnDOR$ tests on asymmetric curves despite unchanged latent accuracy (Figure~\ref{fig:sim}(a)). At $\lambda=1/4$ and $4$, Deeks rejected in $0.357$ and $0.311$ of the replicates; the binomial-fit $\lnDOR$ test gave rates of $0.420$ and $0.399$. The proposed test ranged from $0.098$ to $0.133$ across the seven shapes.

The contrast strengthened as the threshold trend increased at $\lambda=1/2$ (Figure~\ref{fig:sim}(b)). At a strength of $1.6$ residual standard deviations, rejection reached $0.849$ for Deeks and $0.961$ for the binomial-fit $\lnDOR$ test. The proposed test remained between $0.095$ and $0.117$ over the sweep.

\begin{figure}[p]
\centering
\includegraphics[width=\textwidth]{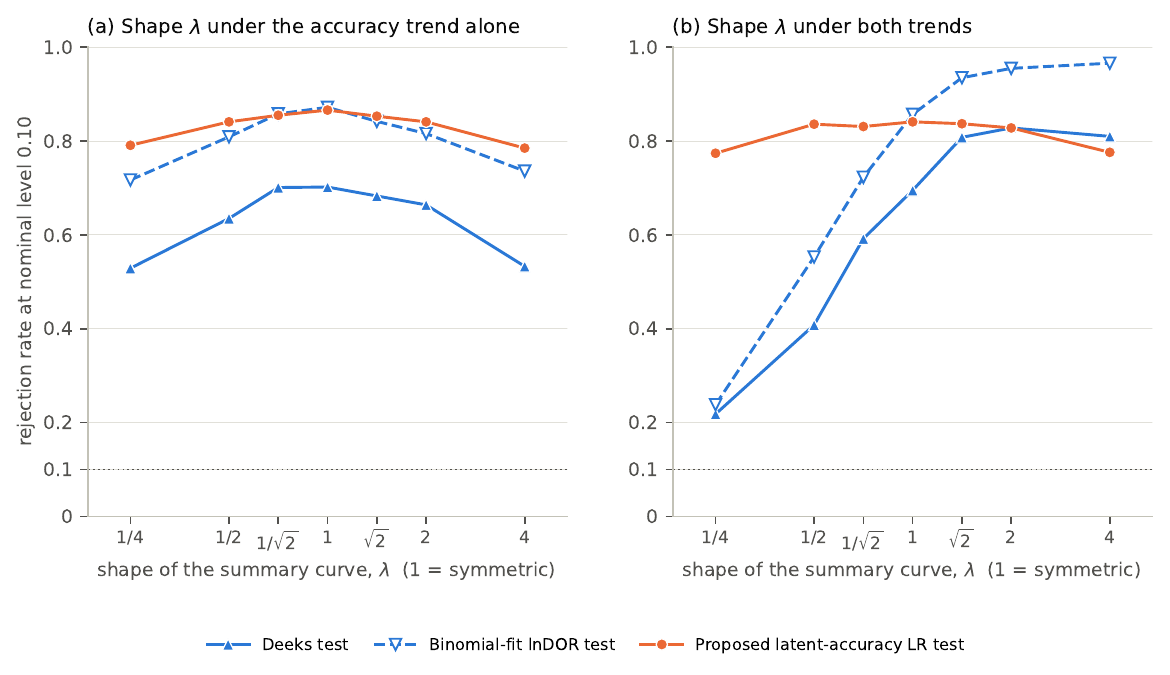}
\caption{Rejection under a fixed latent accuracy trend of $0.5$ per standard deviation of $s_i$, with $k=30$ and $\rho=0.4$. (a)~Accuracy trend alone. (b)~The same accuracy trend with a threshold trend of $0.4$ residual standard deviations per standard deviation of $s_i$. Shape $\lambda$ varies on a logarithmic axis. Each point uses 1000 replicates; the dotted line marks the nominal level $0.10$. Colours and markers are as in Figure~\ref{fig:sim}. Rates are in Table~S1 of the supplementary material.}
\label{fig:power}
\par\vspace{0.3em}\begin{flushleft}\footnotesize Alt text: Two panels compare the same accuracy trend without and with a threshold trend. In the left panel, all three curves peak near shape one. In the right panel, the two blue ln DOR curves fall below their left-panel values at shapes below one and rise above them at shapes above one. The orange accuracy-test curve changes little between panels.\end{flushleft}
\end{figure}

With an accuracy trend alone, the two binomial procedures had similar power at the symmetric shape, whereas the proposed test had higher power at the most asymmetric shapes (Figure~\ref{fig:power}(a)). Across the shapes examined, power ranged from $0.785$ to $0.866$ for the proposed test, from $0.529$ to $0.702$ for Deeks and from $0.717$ to $0.872$ for the binomial-fit $\lnDOR$ test. At $\lambda=1$, the two binomial procedures had power of $0.866$ for the proposed test and $0.872$ for the $\lnDOR$ test. The largest differences between them occurred at $\lambda=1/4$ and $4$.

Adding the fixed threshold trend changed the $\lnDOR$ tests in opposite directions on either side of $\lambda=1$ (Figure~\ref{fig:power}(b)). Below one, threshold and accuracy contributions opposed each other; above one, they added. For the binomial-fit $\lnDOR$ test, rejection fell from $0.717$ to $0.238$ at $\lambda=1/4$ and rose from $0.736$ to $0.966$ at $\lambda=4$. The accuracy trend was unchanged: it was the combined $\lnDOR$ trend that became harder or easier to detect. The proposed test had power of $0.774$ to $0.841$, within $0.025$ of its value without the threshold trend at each shape.

The additional comparisons gave the same qualitative distinction between the two null hypotheses (Table~S3 of the supplementary material). Power of the proposed test increased with the number of studies. It also increased with $\rho$: in this design, larger $\rho$ means less residual variation in latent accuracy, making the fixed accuracy change easier to detect. Across all 51 settings satisfying the accuracy null, its rejection rate ranged from $0.085$ to $0.134$; the detailed comparisons are reported in Section~C.3 of the supplementary material.

\section{Application to two published reviews}\label{sec:apps}

\subsection{Data}\label{sec:appdata}

We re-analyse two published DTA meta-analyses to illustrate how fitted curve shape affects the relation between $\lnDOR$ and latent accuracy trends. The first is the Cochrane review of faecal immunochemical tests (FIT) for colorectal cancer in average-risk screening populations \citep{grobbee2022}. We used its ``reference standard: positive'' analysis. In this design, FIT-positive participants were referred for colonoscopy, and FIT-negative participants were followed for interval cancers through registries or clinical follow-up. We obtained the published $2\times2$ tables from the review's data file. An LLM assisted with transcribing these tables into the analysis data set (see Acknowledgements). Both authors independently checked every entry against the review's data file. The data comprise 23 entries from 22 studies; one study contributes two FIT brands. The entries contain 1,179 to 747,076 participants and 6 to 1,578 cancers.

The review was identified by screening Cochrane DTA reviews with published data files for analyses with at least 20 entries and an asymmetric fitted curve. Among the review's FIT analyses, this analysis was selected because it had the largest studies, the widest range of study sizes and the fewest empty cells.

In this FIT analysis, false negatives are interval cancers identified during follow-up, which ranged from one to more than three years across entries. The sensitivity denominator therefore depends on the follow-up of test-negative participants. Because cancer prevalence is low, effective sample size is close to four times the number of cancers. The size axis thus primarily reflects the number of cancers rather than the total number of participants. Further details are in Section~D.1 of the supplementary material.

The second example is the meta-analysis of impedance plethysmography (IPG) for symptomatic deep-vein thrombosis from a health-technology assessment, comprising 42 cohorts from 1972--2000 \citep{goodacre2006}. The source reports sensitivities and specificities with exact confidence intervals but not the group sizes. An LLM assisted with extracting the reported sensitivities, specificities and confidence intervals (see Acknowledgements). Both authors independently checked every extracted value against the source report. We then used our scripts to reconstruct integer $2\times2$ tables compatible with the printed values. The primary analysis used the minimum-count reconstruction within a restricted candidate set; we repeated the analysis with the maximum-count reconstruction from that set (Section~D.2 of the supplementary material). Across the two primary analyses, 6 of the 65 entries had an empty cell.

\subsection{Results}\label{sec:appresults}

\begin{table}[t]
\centering
\caption{Size trends in the two published reviews; IPG uses the minimum-count reconstruction. Trend estimates are per unit of $s_i=1/\sqrt{\ESS_i}$. The $\lnDOR$ trend is decomposed into threshold and accuracy contributions by Proposition~\ref{thm:decomp}(iii). Parentheses give standard errors. Deeks and the binomial-fit $\lnDOR$ test use a $t_{k-2}$ reference; latent-trend tests use the signed likelihood-ratio root with the same reference. Full fitted parameters and both IPG reconstruction endpoints are in Table~S9 of the supplementary material.}
\label{tab:apps}
\footnotesize
\setlength{\tabcolsep}{6pt}
\begin{tabular}{lcc}
\toprule
 & \makecell{FIT (colorectal cancer)\\$k=23$ entries} & \makecell{IPG (DVT)\\$k=42$ cohorts} \\
\midrule
\multicolumn{3}{l}{\emph{Size trends: estimate (SE); $p$}} \\
\addlinespace[2pt]
Deeks test: slope of $\lnDOR$ on $s_i$ & $-3.5$\ (4.1); 0.406 & $-10.7$\ (4.1); 0.012 \\
Fitted $\lnDOR$ trend, $\hat\beta_\eta-\hat\beta_\varphi$ & $-1.0$\ (2.7); 0.710 & $-9.7$\ (3.6); 0.010 \\
\quad of which threshold part; accuracy part & $+2.95$; $-3.94$ & $-0.02$; $-9.65$ \\
Proposed: latent accuracy trend $\hat\gamma_\alpha$ & $-3.7$\ (1.5); 0.029 & $-9.7$\ (3.9); 0.023 \\
\addlinespace[4pt]
\multicolumn{3}{l}{\emph{Threshold trend and curve shape}} \\
\addlinespace[2pt]
Latent threshold trend $\hat\gamma_\theta$ (SE); $p$ & $+3.9$\ (2.1); 0.081 & $+6.3$\ (2.3); 0.013 \\
Shape $\hat\lambda$ [95\% CI] & 2.11\ [1.73, 2.56] & 1.00\ [0.65, 1.53] \\
\bottomrule
\end{tabular}
\end{table}

\begin{figure}[p]
\centering
\includegraphics[width=0.98\textwidth]{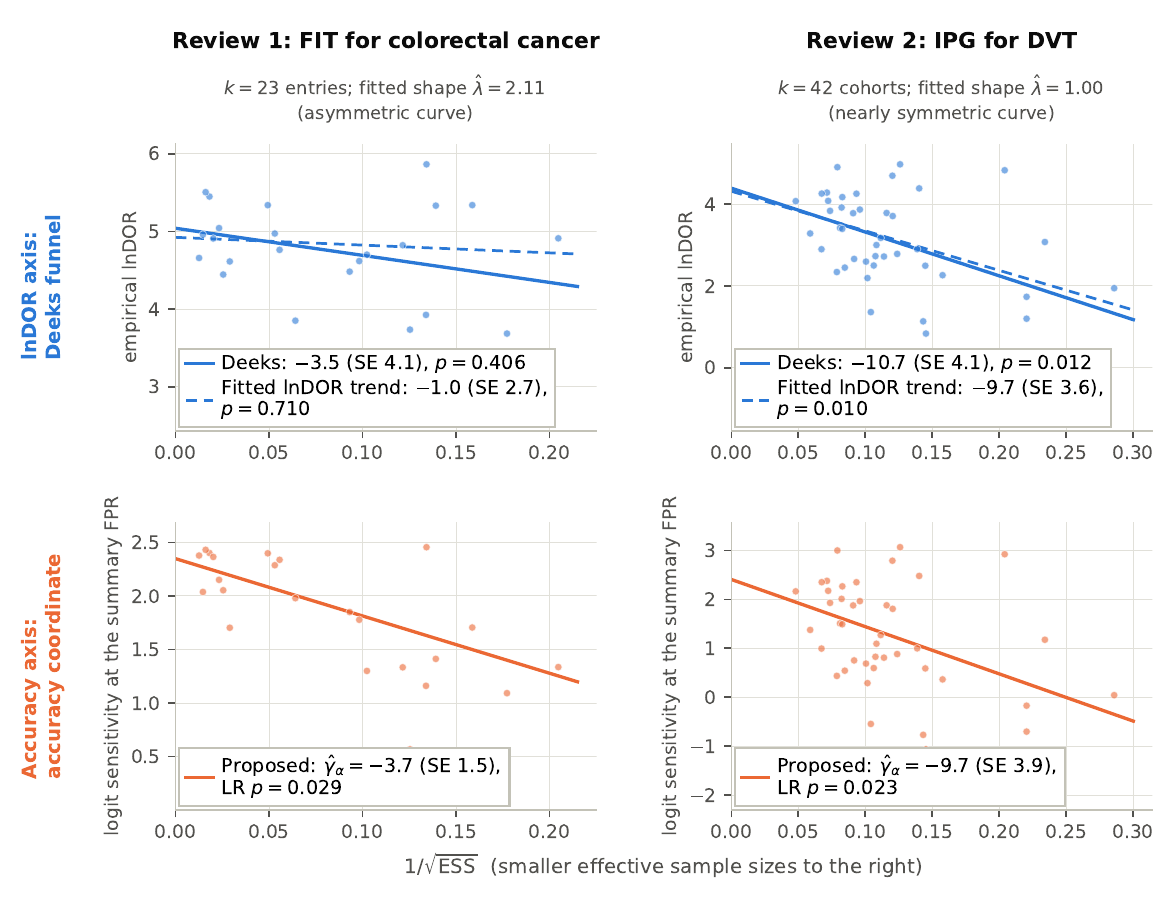}
\caption{Size trends in the FIT (left) and IPG (right) reviews. The upper panels show empirical $\lnDOR$ with the Deeks regression (solid blue) and binomial-fit $\lnDOR$ trend (dashed blue). The lower panels show the accuracy coordinate of Section~\ref{sec:inference} with its fitted trend. All panels use $s_i=1/\sqrt{\ESS_i}$, with smaller effective sample sizes to the right. Continuity correction is used for the displayed empirical logits and the Deeks analysis; the other lines and tests use the binomial likelihood. Numerical estimates are in Table~\ref{tab:apps}.}
\label{fig:apps}
\par\vspace{0.3em}\begin{flushleft}\footnotesize Alt text: Four funnel-type panels, one column per review and one row per axis. Left column, FIT: in the Deeks funnel a shallow solid line and a nearly flat dashed line; below it a clearly falling line in the accuracy coordinate. Right column, IPG: a falling solid line and a falling dashed line close together in the Deeks funnel above a falling line in the accuracy coordinate.\end{flushleft}
\end{figure}

The three tests gave different conclusions in FIT but agreed in IPG (Table~\ref{tab:apps} and Figure~\ref{fig:apps}). In both reviews, smaller effective sample sizes were associated with higher fitted latent thresholds and lower fitted latent accuracy. The fitted curve shape determined how these two associations combined on the $\lnDOR$ scale.

\emph{FIT for colorectal cancer.} The threshold and accuracy contributions largely cancelled on the $\lnDOR$ scale. The fitted shape was $\hat\lambda=2.11$ (95\% confidence interval 1.73 to 2.56). The threshold contribution to the $\lnDOR$ trend was $+2.95$, and the accuracy contribution was $-3.94$. Thus, the threshold contribution offset about three-quarters of the accuracy contribution. Neither $\lnDOR$ test detected a study-size trend (Deeks $p=0.406$; binomial-fit $\lnDOR$ $p=0.710$), whereas the proposed test detected a negative latent accuracy trend ($\hat\gamma_\alpha=-3.7$, $p=0.029$).

In the fitted model, smaller effective sample sizes were associated with more liberal operating positions and lower summary curves (Proposition~\ref{thm:curve}). In the observed data, the seven entries with the smallest effective sample sizes had an unweighted mean false-positive rate of 7.3\%, compared with 4.4\% in the seven largest. The difference between the proposed and Deeks tests persisted in all 23 leave-one-out fits: the proposed test rejected at the 10\% level in every fit and Deeks in none (Section~D.1 of the supplementary material).

\emph{IPG.} All three tests detected negative study-size trends. The fitted shape was close to one ($\hat\lambda=1.00$, 95\% confidence interval 0.65 to 1.53), and the threshold contribution to the $\lnDOR$ trend was negligible ($-0.02$). The $p$ values were 0.012 for Deeks, 0.010 for the binomial-fit $\lnDOR$ test and 0.023 for the proposed test. These conclusions were unchanged between the minimum-count and maximum-count reconstructions examined (Section~D.2 of the supplementary material).

\section{Discussion}\label{sec:disc}

We have developed a test for small-study effects in HSROC latent accuracy that allows latent threshold to vary with study size. Under the common-shape model, the accuracy null states that the same summary ROC curve applies across study sizes. The $\lnDOR$ null has this interpretation only at symmetry. With an asymmetric curve, a threshold trend can produce an $\lnDOR$ trend without a latent accuracy trend, or obscure a latent accuracy trend through opposing contributions.

In simulations, the proposed test remained close to its nominal level when latent accuracy had no size trend, including settings with a threshold trend. Its power changed little when a threshold trend was added to a fixed latent accuracy trend. The two reviews illustrated how curve shape affects the relation between the two assessments. In FIT, threshold and accuracy contributions largely cancelled on the $\lnDOR$ scale; in IPG, the nearly symmetric fitted curve made the threshold contribution negligible.

In FIT, the disagreement between the $\lnDOR$ and latent accuracy tests persisted when both used the same binomial fit. Replacing the empirical-logit regression with a binomial likelihood therefore did not resolve the difference between the two assessments. This illustrates why the null hypothesis matters alongside the estimation method. Related multivariate tests assess the joint null that both sensitivity and specificity funnels are symmetric \citep{hong2020,noma2020}. The proposed test instead concerns latent accuracy while leaving the latent threshold trend unrestricted.

In both reviews, smaller effective sample sizes were associated with lower fitted latent accuracy. Such associations may reflect selective publication, differences in study quality or genuine clinical heterogeneity \citep{egger1997,sterne2011,deeks2005}. The proposed test does not distinguish these explanations. Selection models address the further question of how the summary curve changes under specified assumptions about selective publication \citep{zhou2023sroc,hu2026copas}.

The interpretation depends on the common-shape HSROC model. With one operating point per study, the fitted latent threshold direction cannot be assumed to represent the effect of changing a clinical cutoff within a study \citep{hamza2009}. Heterogeneity affecting the two logits differently can alter that direction, and a shape that varies with study size falls outside the curve-invariance result in Proposition~\ref{thm:curve}. Estimating shape can also be difficult with few studies or sparse tables \citep{takwoingi2017}. The finite-sample reference distribution used for the proposed test is an approximation.

The two applications also have data limitations. In FIT, observed false-positive rates ranged from $2.0\%$ to $11.5\%$, so interpretation beyond this range relies on extrapolation of the fitted curve. Variation in follow-up of test-negative participants can also affect observed sensitivity. In IPG, the conclusions were unchanged between the two reconstructions examined; this does not establish stability across all tables compatible with the published summaries.

The proposed procedure tests the null of no latent accuracy trend within the existing HSROC model \citep{rutter2001,harbord2007}. In practice, the latent accuracy and threshold trends should be reported together with the fitted shape and its uncertainty. Plotting the accuracy coordinate alongside the Deeks funnel helps interpret these trends. This assessment distinguishes systematic differences in the summary curve across study sizes from differences in operating position along it.

\section*{Software}

Analyses use Python (NumPy/SciPy). The reproducibility archive accompanying the submission contains the fitting and testing code, simulation generator, study-level data, replicate-level results, and manifests of the 80 simulation settings and their seeds. It also includes the reconstruction scripts and the code used to produce the figures and tables. The archive (version 1.0.1) is publicly available at \url{https://github.com/Kazuki-Nishida/dta-small-study-effects}.

\section*{Supplementary material}

Supplementary material contains proofs, computational details, additional simulation results and diagnostics, and the data and sensitivity analyses for the two reviews.

\section*{Data availability}

The study-level tables of the FIT review are the published counts of the Cochrane review's analysis \citep{grobbee2022}, and those of the IPG review were reconstructed from the published tables of its source \citep{goodacre2006} as described in Section~D of the supplementary material; both are distributed with the software.

\bibliographystyle{biorefs}
\bibliography{references}

\clearpage
\setcounter{section}{0}\setcounter{table}{0}\setcounter{figure}{0}\setcounter{equation}{0}
\renewcommand{\thesection}{\Alph{section}}
\renewcommand{\thetable}{S\arabic{table}}
\renewcommand{\thefigure}{S\arabic{figure}}
\renewcommand{\theHsection}{supp.\Alph{section}}
\renewcommand{\theHsubsection}{supp.\Alph{section}.\arabic{subsection}}
\renewcommand{\theHtable}{supp.\arabic{table}}
\renewcommand{\theHfigure}{supp.\arabic{figure}}
\renewcommand{\theHequation}{supp.\Alph{section}.\arabic{equation}}
\markboth{Y. Nonomiya and K. Nishida}{Supplementary Materials: small-study effects on the hierarchical summary ROC curve}
\begin{center}
{\large\bfseries Supplementary Materials}\\[6pt]
{\itshape Small-study effects on the hierarchical summary ROC curve: latent accuracy and threshold trends in meta-analysis of diagnostic test accuracy}\\[4pt]
Yuta Nonomiya and Kazuki Nishida
\end{center}
\vspace{6pt}
\noindent This supplement (appended here to the main text) contains the proofs of the results stated in the main text (Section~A); the details of the likelihood fit, the constrained fits and the standard errors (Section~B); the simulation results and computational diagnostics (Section~C); and the data and sensitivity analyses for the two reviews (Section~D). Equation, proposition, table and figure numbers of the form (2.3), Proposition~2.1, Table~2 and Figure~1 refer to the main text; numbers of the form (A.1) and Table~S1 refer to this supplement.

\section{Proofs}\label{sapp:proofs}

\noindent Notation follows the main text: (2.2) is the HSROC parametrization with latent threshold $\theta_i$, latent accuracy $\alpha_i$, size trends $\gamma_\theta$, $\gamma_\alpha$ and shape $\bH$; the bivariate parametrization of Section~2.1 has trend vector $(\beta_\eta,\beta_\varphi)$ and residual standard deviations $\sigma_\eta,\sigma_\varphi$ with correlation $\rho$; $\lambda=e^{-\bH}=\sigma_\eta/\sigma_\varphi$; and $e_\pm=e^{\pm\bH/2}$, so that $e_-=\lambda^{1/2}$ and $e_+=\lambda^{-1/2}$. All moments are conditional on $s_i$.

\subsection{Proposition 2.1}\label{sapp:p21}

\begin{thmdecomp}
See the main text.
\end{thmdecomp}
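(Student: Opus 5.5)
The statement is the HSROC decomposition, Proposition~2.1, and everything follows from the linear map in (2.2). Writing $e_-=\lambda^{1/2}$ and $e_+=\lambda^{-1/2}$ as in the notation paragraph, (2.2) reads $\eta_i=e_-(\theta_i+\alpha_i/2)$ and $\varphi_i=e_+(\theta_i-\alpha_i/2)$. The plan is to read off directions, eliminate coordinates, and then take conditional expectations given $x_i$.

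For part (i), I would differentiate $(\varphi_i,\eta_i)$ with respect to $\theta_i$, which gives $(e_+,e_-)=\lambda^{-1/2}(1,\lambda)$. Differentiating with respect to $\alpha_i$ gives $\tfrac12(-e_+,e_-)=\tfrac12\lambda^{-1/2}(-1,\lambda)$. So the directions are $(1,\lambda)$ and $(-1,\lambda)$ up to positive scaling. Then, since $\lambda e_+=\lambda^{1/2}$, we have $\lambda\varphi_i=\lambda^{1/2}(\theta_i-\alpha_i/2)$, while $\eta_i=\lambda^{1/2}(\theta_i+\alpha_i/2)$. Taking the difference and the sum gives (2.7) directly.

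For part (ii), I would take expectations conditional on $x_i$, using $E\theta_i=\Theta+\gamma_\theta x_i$ and $E\alpha_i=\Lambda+\gamma_\alpha x_i$. Linearity then gives $E\eta_i$ and $E\varphi_i$ as affine functions of $x_i$. Matching slopes with the bivariate parametrization yields $\beta_\eta$ and $\beta_\varphi$. The identification is justified by the correspondence of \citet{harbord2007}, under which the two parametrizations are the same Gaussian model, so the conditional means must agree. The converse formulas follow by applying the slope operator to (2.7): $\beta_\eta-\lambda\beta_\varphi=\lambda^{1/2}\gamma_\alpha$ and $\beta_\eta+\lambda\beta_\varphi=2\lambda^{1/2}\gamma_\theta$. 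Since $\lambda^{1/2}>0$, the accuracy null $\gamma_\alpha=0$ is equivalent to $\beta_\eta=\lambda\beta_\varphi$.

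For part (iii), I would substitute the part (ii) expressions into $\beta(c)=\beta_\eta-c\beta_\varphi$:
\[
\lambda^{1/2}(\gamma_\theta+\gamma_\alpha/2)-c\lambda^{-1/2}(\gamma_\theta-\gamma_\alpha/2)
=\lambda^{-1/2}\bigl[(\lambda-c)\gamma_\theta+\tfrac{\lambda+c}{2}\gamma_\alpha\bigr].
\]
Freedom from $\gamma_\theta$ for every value of $\gamma_\theta$ means the coefficient $\lambda^{-1/2}(\lambda-c)$ must vanish, which happens iff $c=\lambda$. At $c=\lambda$ the expression reduces to $\lambda^{1/2}\gamma_\alpha$.

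Part (iv) is the specialization $c=1$, where the weight on $\gamma_\theta$ is $\lambda^{-1/2}(\lambda-1)$. This is zero iff $\lambda=1$, that is, iff $\bH=0$. When $\bH=0$, (2.7) with $\lambda=1$ gives $\alpha_i=\eta_i-\varphi_i$ and $\theta_i=(\eta_i+\varphi_i)/2$. The true $\lnDOR_i=\eta_i-\varphi_i$ is therefore exactly $\alpha_i$, and its trend is $\beta(1)=\gamma_\alpha$.

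There is no real obstacle in this argument. The one point needing care is the identification in (ii) of the bivariate trend vector with the conditional mean slopes of the HSROC model. This rests on the two parametrizations sharing the same conditional mean structure, not only the same covariance structure, and I would state that explicitly, including the role of $\lambda=\sigma_\eta/\sigma_\varphi$ as the common shape.
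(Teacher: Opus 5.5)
Your proposal is correct and follows the paper's proof essentially step for step. Both arguments differentiate (2.2) to get the two directions, use $\lambda e_+=e_-$ for the elimination identities, take conditional expectations given $x_i$ (of the elimination identities for the inverse formulas), substitute into $\beta(c)$, and specialize to $c=1$ with $e_\pm=1$ at $\bH=0$. Two cosmetic points: the elimination identity you cite as (2.7) is (2.5) in the main text, where (2.7) is the summary curve. Also, the Harbord correspondence is not needed in (ii), since $\beta_\eta,\beta_\varphi$ are by definition the slopes of $E(\eta_i\mid x_i)$ and $E(\varphi_i\mid x_i)$, which (2.2) determines directly.
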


\begin{proof}
(i) From (2.2), $\partial(\varphi_i,\eta_i)/\partial\theta_i=(e_+,e_-)\propto(1,e_-/e_+)=(1,\lambda)$ and $\partial(\varphi_i,\eta_i)/\partial\alpha_i=(-e_+/2,e_-/2)\propto(-1,\lambda)$. For the elimination, $\eta_i-\lambda\varphi_i=e_-(\theta_i+\alpha_i/2)-\lambda e_+(\theta_i-\alpha_i/2)$, and $\lambda e_+=e_-$, so the expression equals $e_-\alpha_i=\lambda^{1/2}\alpha_i$; likewise $\eta_i+\lambda\varphi_i=2e_-\theta_i=2\lambda^{1/2}\theta_i$:
\begin{equation}\label{eq:elimref}
\eta_i-\lambda\varphi_i=\lambda^{1/2}\alpha_i,\qquad \eta_i+\lambda\varphi_i=2\lambda^{1/2}\theta_i .
\end{equation}
(ii) Taking conditional expectations given $x_i$ in (2.2), $\beta_\eta=e_-(\gamma_\theta+\gamma_\alpha/2)$ and $\beta_\varphi=e_+(\gamma_\theta-\gamma_\alpha/2)$, which are the stated forms; the inverse follows by taking conditional expectations of \eqref{eq:elimref}, or directly by solving the two linear equations. The accuracy null $\gamma_\alpha=0$ is $\lambda^{-1/2}(\beta_\eta-\lambda\beta_\varphi)=0$.
(iii) Substituting (ii), $\beta_\eta-c\beta_\varphi=\lambda^{1/2}(\gamma_\theta+\gamma_\alpha/2)-c\lambda^{-1/2}(\gamma_\theta-\gamma_\alpha/2)=\lambda^{-1/2}[(\lambda-c)\gamma_\theta+\tfrac12(\lambda+c)\gamma_\alpha]$. The coefficient of $\gamma_\theta$ vanishes identically iff $c=\lambda$, where $\beta(\lambda)=\lambda^{-1/2}\lambda\gamma_\alpha=\lambda^{1/2}\gamma_\alpha$.
(iv) At $c=1$ the coefficient of $\gamma_\theta$ is $\lambda^{-1/2}(\lambda-1)$, zero iff $\lambda=1$ iff $\bH=0$. At $\bH=0$, $e_\pm=1$ and (2.2) reads $\eta_i=\theta_i+\alpha_i/2$, $\varphi_i=\theta_i-\alpha_i/2$, whence $\alpha_i=\eta_i-\varphi_i$ and $\theta_i=(\eta_i+\varphi_i)/2$.
\end{proof}

\subsection{Corollary 2.2}\label{sapp:c22}

\begin{corsize}
Under a pure threshold trend ($\gamma_\alpha=0$, $\gamma_\theta\neq0$) with $\bH\neq0$, the size trend of the $\lnDOR$ of the true logits is $\beta(1)=\lambda^{-1/2}(\lambda-1)\gamma_\theta\neq0$, and a two-sided $t$-type test based on an estimator of $\beta(1)$ that is consistent with estimated standard error $O_p(k^{-1/2})$ rejects the hypothesis of no $\lnDOR$ trend with probability tending to one as $k$ grows.
\end{corsize}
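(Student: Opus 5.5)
The plan has two parts: an exact algebraic identity for the true $\lnDOR$ trend, followed by a routine consistency argument for the test.

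\emph{Algebraic part.} I will apply Proposition~2.1(iii) at $c=1$, since the $\lnDOR$ coordinate is $a_i(1)=\eta_i-\varphi_i$. Setting $\gamma_\alpha=0$ in \eqref{eq:mix} gives $\beta(1)=\lambda^{-1/2}(\lambda-1)\gamma_\theta$ directly. For nonvanishing, note that $\lambda^{-1/2}>0$. Also $\lambda-1=e^{-\bH}-1\neq0$ exactly when $\bH\neq0$, by strict monotonicity of the exponential. Together with $\gamma_\theta\neq0$, the product is nonzero. The sign statement in the main text, $\operatorname{sign}\{(\lambda-1)\gamma_\theta\}$, follows at once.

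\emph{Asymptotic part.} Let $\hat\beta$ be the assumed consistent estimator of $\beta(1)$, with estimated standard error $\widehat{\mathrm{se}}=O_p(k^{-1/2})$. Let $T=\hat\beta/\widehat{\mathrm{se}}$, and let the critical value $q_k$ be the $1-a/2$ quantile of the reference distribution. This is $t_{k-2}$ or normal; in either case $q_k$ is bounded, converging to the normal quantile. Fix $\epsilon=|\beta(1)|/2>0$. On the event $\{|\hat\beta-\beta(1)|<\epsilon\}$ we have $|\hat\beta|>|\beta(1)|/2$. Given $\delta>0$, choose $M$ so that $P(\widehat{\mathrm{se}}>Mk^{-1/2})<\delta$ for all large $k$. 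Then on the intersection of the two good events,
\[
|T|>\frac{|\beta(1)|\,k^{1/2}}{2M},
\]
and this exceeds $\sup_k q_k$ for large $k$. Hence $P(|T|>q_k)\ge 1-2\delta$ eventually. Since $\delta$ is arbitrary, the rejection probability tends to one.

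The only step needing care is bookkeeping rather than substance. The theorem concerns the \emph{true} logits, so the model quantity $\beta(1)$ is defined without the continuity correction. I therefore phrase consistency as an assumption about the estimator, as the statement does, and do not verify it for Deeks' empirical-logit regression. For the binomial-fit Wald contrast, consistency would follow from standard maximum likelihood theory under correct specification.
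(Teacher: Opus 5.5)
Your proof is correct and follows the paper's route. You get the identity and its nonvanishing from Proposition~2.1(iii) at $c=1$ with $\gamma_\alpha=0$, then show the $t$ statistic diverges, and you spell that step out more carefully than the paper's one-line version (bounded $t_{k-2}$ critical values, explicit $\epsilon$--$\delta$ bookkeeping). The paper adds one remark you omit: since $E\{a_i(1)\mid s_i\}$ is exactly linear in $s_i$, any weighted least-squares slope of the true $\lnDOR$ on $s_i$ equals $\beta(1)$, including Deeks's $\ESS$-weighted one. As you note, this is not needed once consistency for $\beta(1)$ is assumed, as the statement does.
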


\begin{proof}
Under the bivariate parametrization, $E\{a_i(c)\mid s_i\}=(\mu_\eta-c\mu_\varphi)+(\beta_\eta-c\beta_\varphi)x_i$ is exactly linear in $s_i$. When the conditional mean is exactly linear, the population slope functional of a weighted least-squares regression of $a_i(c)$ on $s_i$ equals its slope for \emph{any} fixed weighting scheme, weights that depend on $s_i$ included, because the weighted normal equations are solved by the true conditional-mean coefficients. By Proposition~2.1(iii) with $c=1$ and $\gamma_\alpha=0$, that slope is $\lambda^{-1/2}(\lambda-1)\gamma_\theta\neq0$. An estimator consistent for it with estimated standard error $O_p(k^{-1/2})$ therefore has a $t$ statistic whose absolute value diverges, which gives the claim.
\end{proof}

\subsection{Proposition 2.3}\label{sapp:p23}

\begin{thmcurve}
Under (2.2), with $\bH$ not depending on $x$, $\partial S_x(u)/\partial x=\lambda^{1/2}\gamma_\alpha S_x(u)\{1-S_x(u)\}$ for every $u\in(0,1)$. Consequently the accuracy null holds iff the summary curve is the same for studies of all sizes, and under the alternative the sign of $\gamma_\alpha$ is the sign of the size trend of the sensitivity at every false-positive rate and of the area under the summary curve, $\SAUC(x)=\int_0^1S_x(u)\,du$.
\end{thmcurve}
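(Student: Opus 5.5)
The plan is a direct chain-rule computation followed by two monotonicity arguments. Write $g_x(u)=\lambda\logit(u)+\lambda^{1/2}\alpha(x)$, so that $S_x(u)=\expit\{g_x(u)\}$. Since $\bH$, and hence $\lambda$, does not depend on $x$, only $\alpha(x)=\Lambda+\gamma_\alpha x$ varies with $x$, and $\partial g_x(u)/\partial x=\lambda^{1/2}\gamma_\alpha$ for each fixed $u$. Using $\expit'(z)=\expit(z)\{1-\expit(z)\}$, the chain rule gives \eqref{eq:shift} immediately.

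For the equivalence, first suppose $\gamma_\alpha=0$. Then the right-hand side of \eqref{eq:shift} vanishes for every $u$, so $S_x(u)$ is constant in $x$; indeed $\alpha(x)\equiv\Lambda$ directly. Conversely, suppose $S_x(u)=S_{x'}(u)$ for some $u$ and all $x,x'$. Because $\expit$ is injective, $\lambda^{1/2}\alpha(x)=\lambda^{1/2}\alpha(x')$. Since $\lambda^{1/2}>0$, this forces $\gamma_\alpha(x-x')=0$, so $\gamma_\alpha=0$ whenever two distinct sizes are present. Equivalently, $S_x(u)\{1-S_x(u)\}>0$ on $(0,1)$, so a zero derivative forces $\gamma_\alpha=0$. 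The accuracy null \eqref{eq:null} leaves $\gamma_\theta$ free, and the argument should note that $\gamma_\theta$ never enters $S_x$. This is the sense in which a threshold trend moves the operating position without changing the curve, in line with Proposition~\ref{thm:decomp}(i).

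For the sign statements, the factor $\lambda^{1/2}S_x(u)\{1-S_x(u)\}$ in \eqref{eq:shift} is strictly positive for every $u\in(0,1)$. Hence $\operatorname{sign}\partial S_x(u)/\partial x=\operatorname{sign}\gamma_\alpha$ at every false-positive rate. For $\SAUC(x)=\int_0^1S_x(u)\,du$, I would differentiate under the integral sign. The integrand derivative is bounded in absolute value by $\lambda^{1/2}|\gamma_\alpha|/4$, uniformly in $u$ and $x$, so dominated convergence on the bounded interval $(0,1)$ justifies the interchange. This gives $\SAUC'(x)=\lambda^{1/2}\gamma_\alpha\int_0^1S_x(1-S_x)\,du$. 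The integral is strictly positive because the integrand is continuous and positive on $(0,1)$, so the sign of $\SAUC'(x)$ is the sign of $\gamma_\alpha$.

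The only step needing care is this interchange of derivative and integral, together with the strict positivity of the integral near the endpoints, where $S_x\to0$ or $1$. The uniform bound $1/4$ on $\expit'$ settles the first point, and positivity on the open interval settles the second. No genuine obstacle is expected otherwise.
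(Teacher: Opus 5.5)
Your proposal is correct and follows the paper's proof essentially step by step. Both arguments apply the chain rule to the logit-scale curve $\logit S_x(u)=\lambda\logit u+\lambda^{1/2}\alpha(x)$ with $dS/d\logit S=S(1-S)$, use strict positivity of $S_x(u)\{1-S_x(u)\}$ for the pointwise sign and the equivalence, and integrate a one-signed quantity for $\SAUC$. Your dominated-convergence justification for differentiating under the integral is a valid detail that the paper leaves implicit; it could also be avoided by integrating the one-signed difference $S_{x'}(u)-S_x(u)$ directly.
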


\begin{proof}
By \eqref{eq:elimref}, a study with latent accuracy $\alpha$ lies on $\eta=\lambda\varphi+\lambda^{1/2}\alpha$, so the curve at mean accuracy $\alpha(x)=\Lambda+\gamma_\alpha x$ is $\logit S_x(u)=\lambda\logit u+\lambda^{1/2}\alpha(x)$. Differentiating in $x$ with $\lambda$ fixed, $\partial\logit S_x(u)/\partial x=\lambda^{1/2}\gamma_\alpha$, and since $dS/d\logit S=S(1-S)$, (2.8) follows. As $S_x(u)\{1-S_x(u)\}>0$ on $(0,1)$, the sign of $\partial S_x(u)/\partial x$ is the sign of $\gamma_\alpha$ for every $u$, and integrating a function of one sign over $(0,1)$ gives the statement for $\SAUC(x)$. If $\gamma_\alpha=0$ the curve does not depend on $x$; if $\gamma_\alpha\neq0$ the curves at two sizes differ at every $u$.
\end{proof}

\section{Estimation and inference}\label{sapp:comp}

\subsection{The size-adjusted bivariate binomial meta-regression}\label{sapp:glmm}

The fit is the generalized linear mixed model of \citet{chu2006} with a size covariate: $\mathrm{TP}_i\sim\mathrm{Bin}\{n_{1i},\expit(\eta_i)\}$, $\mathrm{FP}_i\sim\mathrm{Bin}\{n_{0i},\expit(\varphi_i)\}$, and the bivariate normal random effects of Section~2.1 of the main text, with parameter vector $\vartheta=(\mu_\eta,\beta_\eta,\mu_\varphi,\beta_\varphi,\log\sigma_\eta,\log\sigma_\varphi,\operatorname{artanh}\rho)$. Each study's likelihood contribution is the two-dimensional integral of the product of its two binomial probabilities against the bivariate normal density, evaluated by adaptive Gauss--Hermite quadrature: for every study and every candidate $\vartheta$ the mode of the log integrand is found by damped Newton iteration (the integrand is log-concave, so the iteration converges from the prior mean), the integrand is re-centred at the mode and rescaled by the Cholesky factor of the inverse negative Hessian there, and a $9\times9$ product Gauss--Hermite rule is applied; a $15\times15$ rule changes no reported quantity in the third decimal. The likelihood is maximized by L-BFGS-B (bounds $\log\sigma\in[-4,3]$, $\operatorname{artanh}\rho\in[-4,4]$) from a normal-approximation fit of the empirical logits with size-only (smoothed) within-study variances as starting value, followed by a Nelder--Mead polish in the applications. The observed information is the numerical Hessian of the negative log-likelihood at the optimum, and $\widehat V$ its inverse.

\subsection{Constrained maximization for the likelihood-ratio tests}\label{sapp:lrt}

The accuracy null $\gamma_\alpha=0$ is imposed by reparametrizing the two trend coefficients through Proposition~2.1(ii) with $\gamma_\alpha=0$,
\[
\beta_\eta=\lambda^{1/2}\gamma_\theta,\qquad \beta_\varphi=\lambda^{-1/2}\gamma_\theta,\qquad \lambda=\exp(\log\sigma_\eta-\log\sigma_\varphi),
\]
so that the constrained model has the six parameters $(\mu_\eta,\mu_\varphi,\gamma_\theta,\log\sigma_\eta,\log\sigma_\varphi,\operatorname{artanh}\rho)$ and the same likelihood function; it is maximized by L-BFGS-B from two starts, the full-model estimate of $\gamma_\theta$ and zero, and the better optimum is kept. The threshold null $\gamma_\theta=0$ uses $\beta_\eta=\lambda^{1/2}\gamma_\alpha/2$ and $\beta_\varphi=-\lambda^{-1/2}\gamma_\alpha/2$. Each likelihood-ratio statistic is twice the difference of the maximized log-likelihoods; its asymptotic reference is $\chi^2_1$, and the main text refers its signed root to $t_{k-2}$, the small-sample convention of the Deeks test, a choice examined in Section~\ref{sapp:sim}. Because the constraint is written in the HSROC parametrization and the likelihood is that of the bivariate parametrization, the statistic is the same whichever parametrization is used to fit the full model \citep{harbord2007}, and it does not depend on whether the null is expressed for $\gamma_\alpha$ or for $\lambda^{1/2}\gamma_\alpha=\beta_\eta-\lambda\beta_\varphi$.

\subsection{Standard errors}\label{sapp:delta}

With $\hat\lambda=\exp(\log\hat\sigma_\eta-\log\hat\sigma_\varphi)$, the gradient of $\hat\lambda$ with respect to $(\log\sigma_\eta,\log\sigma_\varphi,\operatorname{artanh}\rho)$ is $\hat\lambda(1,-1,0)$. The latent trends $\hat\gamma_\alpha=\hat\lambda^{-1/2}\hat\beta_\eta-\hat\lambda^{1/2}\hat\beta_\varphi$ and $\hat\gamma_\theta=\tfrac12(\hat\lambda^{-1/2}\hat\beta_\eta+\hat\lambda^{1/2}\hat\beta_\varphi)$ are functions of the fitted trend vector and of $\hat\lambda$; their standard errors are $(g^\top\widehat Vg)^{1/2}$ with $g$ the gradient of the function with respect to $\vartheta$, in the order of $\vartheta$,
\[
g_\alpha=\bigl(0,\;\hat\lambda^{-1/2},\;0,\;-\hat\lambda^{1/2},\;-\hat\gamma_\theta,\;\hat\gamma_\theta,\;0\bigr),\qquad
g_\theta=\bigl(0,\;\tfrac12\hat\lambda^{-1/2},\;0,\;\tfrac12\hat\lambda^{1/2},\;-\tfrac14\hat\gamma_\alpha,\;\tfrac14\hat\gamma_\alpha,\;0\bigr),
\]
where the entries on $\log\sigma_\eta$ and $\log\sigma_\varphi$ come from the dependence of the functions on $\hat\lambda$: since $\partial\hat\lambda^{\pm1/2}/\partial\log\sigma_\eta=\pm\tfrac12\hat\lambda^{\pm1/2}$, $\partial\hat\gamma_\alpha/\partial\log\sigma_\eta=-\tfrac12\hat\lambda^{-1/2}\hat\beta_\eta-\tfrac12\hat\lambda^{1/2}\hat\beta_\varphi=-\hat\gamma_\theta$ and $\partial\hat\gamma_\theta/\partial\log\sigma_\eta=-\tfrac14\hat\gamma_\alpha$, with the opposite signs for $\log\sigma_\varphi$; these entries carry the estimation variance of the shape and its covariance with the two slope estimates into the standard error. The shape $\hat\bH=\log\hat\sigma_\varphi-\log\hat\sigma_\eta$ has standard error $(\widehat V_{55}+\widehat V_{66}-2\widehat V_{56})^{1/2}$ (indices in the order of $\vartheta$), and the interval for $\lambda$ is the exponential of minus the $t_{k-2}$ interval for $\bH$. The fitted trend of the accuracy coordinate displayed in Figure~3 of the main text is $\hat\beta_\eta-\hat\lambda\hat\beta_\varphi=\hat\lambda^{1/2}\hat\gamma_\alpha$ per unit of $s_i$, drawn through $\hat\mu_\eta$ at $x=0$.

\section{Simulation design and performance}\label{sapp:sim}

\subsection{Generator and settings}\label{sapp:gen}

Study sizes are drawn once per replicate: total size $N_i$ log-normal with median 300 and log-scale standard deviation $0.8$, rounded and clipped to $[40,4000]$, with $n_{1i}=\max(\lfloor0.35N_i\rfloor,10)$ diseased and $n_{0i}=\max(N_i-n_{1i},10)$ non-diseased; $\ESS_i=4n_{1i}n_{0i}/(n_{1i}+n_{0i})$, $s_i=1/\sqrt{\ESS_i}$, and $z_i=(s_i-\bar s)/\mathrm{SD}(s)$ is $s_i$ standardized within the replicate (population standard deviation). With $\lambda^{1/2}=e^{-\bH/2}$, the generator sets $\Theta=(\mu_\eta\lambda^{-1/2}+\mu_\varphi\lambda^{1/2})/2$ and $\Lambda=\mu_\eta\lambda^{-1/2}-\mu_\varphi\lambda^{1/2}$ so that the mean operating point is $(\mu_\eta,\mu_\varphi)=(1,-2)$ at every shape, $\sigma_\theta^2=\sigma_\eta\sigma_\varphi(1+\rho)/2$ and $\sigma_\alpha^2=2\sigma_\eta\sigma_\varphi(1-\rho)$ with $\sigma_\eta\sigma_\varphi=0.75^2$ (the correspondence of \citet{harbord2007} between the two parametrizations), and generates
\begin{gather*}
\theta_i=\Theta+\rho_s\,\sigma_\theta z_i+\sigma_\theta\varepsilon_i,\qquad
\alpha_i=\Lambda+\delta\,z_i+\sigma_\alpha\varepsilon_i',\\
\eta_i=\lambda^{1/2}(\theta_i+\alpha_i/2),\qquad \varphi_i=\lambda^{-1/2}(\theta_i-\alpha_i/2),
\end{gather*}
with independent standard normal $\varepsilon_i,\varepsilon_i'$; the counts are binomial given $(\eta_i,\varphi_i)$, with $\expit(\varphi_i)$ bounded below by $10^{-4}$. The threshold trend is $\rho_s$ residual standard deviations of the latent threshold per standard deviation of $s_i$ and the accuracy trend $\delta$ per standard deviation of $s_i$, both on the generating scale $z_i$; the fits use $x_i=s_i-\bar s$, so the fitted trends relate to the generating ones by $\gamma_\theta\,\mathrm{SD}(s)=\rho_s\sigma_\theta$ and $\gamma_\alpha\,\mathrm{SD}(s)=\delta$, with $\mathrm{SD}(s)$ varying between replicates. Because the trends are added on top of the residual variation, the residual $\lambda$ and $\rho$ equal the design values in every scenario; $\rho$ is the residual correlation of the two logits induced by the HSROC parametrization, and the latent threshold and latent accuracy residuals are independent at every $\rho$, so varying $\rho$ does not relax that independence.

\emph{Baseline settings} (Figures~1 and 2 of the main text): $k=30$ and $\rho=0.4$; the four scenarios $(\rho_s,\delta)=(0,0)$, $(0.4,0)$, $(0,0.5)$ and $(0.4,0.5)$ at the seven shapes $\lambda\in\{1/4,1/2,1/\sqrt2,1,\sqrt2,2,4\}$ (Table~\ref{tab:sim_baseline}; Figures~1(a), 2(a) and 2(b)); the threshold-trend strengths $\rho_s\in\{0,0.2,\ldots,1.6\}$ at $\lambda=1/2$ with $\delta=0$ (Figure~1(b); Table~\ref{tab:sim_threshold}). For the shape comparison with both trends present, the true $\lnDOR$ trend per standard deviation of study size is given by
\begin{equation}\label{eq:bdor}
b_{\mathrm{DOR}}=\lambda^{-1/2}\Bigl[(\lambda-1)\rho_s\sigma_\theta+\frac{\lambda+1}{2}\,\delta\Bigr],
\end{equation}
the generating counterpart of Proposition~2.1(iii). \emph{Additional settings} (Table~\ref{tab:sim_krholambda}): the number of studies was varied at $\rho=0.4$ ($k\in\{10,20,30,50\}$), and the residual correlation was varied at $k=30$ ($\rho\in\{0,0.4,0.8\}$). Each comparison used three shapes, $\lambda\in\{1/2,1,2\}$, and the three scenarios specified in Table~\ref{tab:sim_krholambda} (no trend; a threshold trend alone, $\rho_s=0.4$; an accuracy trend alone, $\delta=0.5$). The resulting 54 settings include nine baseline settings, leaving 45 additional settings; the 35 baseline settings are the 28 of the shape comparisons and the seven further strengths of Table~\ref{tab:sim_threshold}. With $\rho$ the residual standard deviations of the latent variables change: $\sigma_\alpha=1.061$, $0.822$ and $0.474$ and $\sigma_\theta=0.530$, $0.627$ and $0.712$ at $\rho=0$, $0.4$ and $0.8$, so $\delta=0.5$ is $0.47$, $0.61$ and $1.05$ residual standard deviations of the latent accuracy: the same change of the latent accuracy at every $\rho$, not the same standardized effect (the tables give $\delta/\sigma_\alpha$). The reported design therefore comprises 80 distinct settings, with 1000 replicates per setting. The same generator and the same seeds are used throughout: every setting has its own seed, so the replicates of different settings are independent (no common random numbers), and the replicate data sets and the seeds are stored with the code, together with a manifest listing the 80 simulation settings and their seeds.

Every replicate is fitted as in Section~\ref{sapp:glmm} (L-BFGS-B from the normal-approximation start, numerical Hessian, no polish) with the constrained fit of Section~\ref{sapp:lrt}; the Deeks test uses the continuity-corrected empirical logits ($+0.5$ in every cell) with $\ESS$ weights. Replicates in which the optimizer reported non-convergence are retained; a $p$ value counts as valid when it is finite and, for the Wald contrasts, when the variance used is finite and positive, and rejection rates are computed over the valid replicates (their number is 1000 unless Tables~\ref{tab:sim_checks} to \ref{tab:sim_checks_d} say otherwise). With 1000 replicates, the Monte Carlo standard error of a rejection rate near $0.10$ is approximately $0.0095$ and is at most $0.016$ for any rate.

\subsection{Performance at the baseline settings}\label{sapp:grid}

Table~\ref{tab:sim_baseline} reports the baseline rejection rates for the Deeks test, the binomial-fit $\lnDOR$ test and the proposed accuracy test. Table~\ref{tab:sim_threshold} reports the threshold-strength sweep. Table~\ref{tab:sim_checks} gives auxiliary statistics and computational diagnostics from the same fits.

Two auxiliary comparisons examine the inference used in the main text. In the no-trend baseline settings, referring the signed likelihood-ratio root to $t_{k-2}$ brought rejection closer to $0.10$ than the $\chi^2_1$ reference at every shape. The accuracy-coordinate Wald test and the proposed likelihood-ratio test differed by at most $0.013$ in the no-trend, threshold-only and accuracy-only scenarios on the baseline shape grid. Their maximum difference was $0.008$ on the threshold-only strength grid and $0.017$ in the both-trend shape comparison. These are comparisons of test forms on the same fits; their reference distributions and diagnostics are documented in Section~\ref{sapp:checks}.

In the both-trend comparison, equation~\eqref{eq:bdor} gives a true $\lnDOR$ trend of $+0.249$ at $\lambda=1/4$, $+0.500$ at $\lambda=1$ and $+1.001$ at $\lambda=4$. The accuracy contribution alone is $+0.625$, $+0.500$ and $+0.625$, respectively. The threshold contribution changes sign at one, explaining the opposite responses on the two sides of Figure~2(b).

\begin{table}[p]
\centering
\caption{Rejection rates at the nominal level $0.10$ in the baseline settings, $k=30$ and $\rho=0.4$, with 1000 valid replicates per cell. These include the data behind Figures~1(a), 2(a) and 2(b). Deeks, the binomial-fit $\lnDOR$ test and the proposed accuracy test are defined in Section~2.4 of the main text. $\delta$ is the latent accuracy change per standard deviation of $s_i$; $\rho_s$ is the threshold change in residual standard deviations per standard deviation of $s_i$. The final block also gives the true $\lnDOR$ trend $b_{\mathrm{DOR}}$ from equation~\eqref{eq:bdor}.}
\label{tab:sim_baseline}
\scriptsize
\setlength{\tabcolsep}{5pt}
\begin{tabular}{@{}ll ccc@{}}
\toprule
Scenario & $\lambda$ & Deeks & $\lnDOR$ fit & Proposed \\
\midrule
 No trend & $1/4$ & 0.153 & 0.124 & 0.117 \\
  & $1/2$ & 0.156 & 0.102 & 0.101 \\
  & $1/\sqrt2$ & 0.137 & 0.107 & 0.109 \\
  & $1$ & 0.157 & 0.120 & 0.117 \\
  & $\sqrt2$ & 0.154 & 0.098 & 0.105 \\
  & $2$ & 0.137 & 0.111 & 0.121 \\
  & $4$ & 0.138 & 0.121 & 0.120 \\
\addlinespace
 Threshold trend ($\rho_s=0.4$) & $1/4$ & 0.357 & 0.420 & 0.099 \\
  & $1/2$ & 0.240 & 0.263 & 0.115 \\
  & $1/\sqrt2$ & 0.188 & 0.146 & 0.115 \\
  & $1$ & 0.136 & 0.099 & 0.098 \\
  & $\sqrt2$ & 0.170 & 0.151 & 0.100 \\
  & $2$ & 0.238 & 0.229 & 0.115 \\
  & $4$ & 0.311 & 0.399 & 0.133 \\
\addlinespace
 Accuracy trend ($\delta=0.5$) & $1/4$ & 0.529 & 0.717 & 0.791 \\
  & $1/2$ & 0.635 & 0.809 & 0.841 \\
  & $1/\sqrt2$ & 0.701 & 0.858 & 0.855 \\
  & $1$ & 0.702 & 0.872 & 0.866 \\
  & $\sqrt2$ & 0.683 & 0.842 & 0.853 \\
  & $2$ & 0.664 & 0.816 & 0.841 \\
  & $4$ & 0.533 & 0.736 & 0.785 \\
\addlinespace
 Both trends ($\rho_s=0.4$, $\delta=0.5$) & $1/4$ ($b_{\mathrm{DOR}}=+0.249$) & 0.218 & 0.238 & 0.774 \\
  & $1/2$ ($+0.353$) & 0.408 & 0.553 & 0.836 \\
  & $1/\sqrt2$ ($+0.420$) & 0.592 & 0.723 & 0.831 \\
  & $1$ ($+0.500$) & 0.695 & 0.857 & 0.841 \\
  & $\sqrt2$ ($+0.595$) & 0.808 & 0.935 & 0.837 \\
  & $2$ ($+0.708$) & 0.828 & 0.955 & 0.828 \\
  & $4$ ($+1.001$) & 0.810 & 0.966 & 0.776 \\
\bottomrule
\end{tabular}
\end{table}

\begin{table}[p]
\centering
\caption{Rejection rates at the nominal level $0.10$ as threshold-trend strength $\rho_s$ varies at $k=30$, $\rho=0.4$ and $\lambda=1/2$, with no accuracy trend ($\delta=0$; 1000 replicates per setting). These are the data behind Figure~1(b). Columns are as in Table~\ref{tab:sim_baseline}.}
\label{tab:sim_threshold}
\scriptsize
\setlength{\tabcolsep}{5pt}
\begin{tabular}{@{}l ccc@{}}
\toprule
$\rho_s$ & Deeks & $\lnDOR$ fit & Proposed \\
\midrule
 $0$ & 0.156 & 0.102 & 0.101 \\
 $0.2$ & 0.181 & 0.131 & 0.095 \\
 $0.4$ & 0.240 & 0.263 & 0.115 \\
 $0.6$ & 0.313 & 0.377 & 0.117 \\
 $0.8$ & 0.424 & 0.547 & 0.113 \\
 $1.0$ & 0.521 & 0.687 & 0.102 \\
 $1.2$ & 0.665 & 0.828 & 0.113 \\
 $1.4$ & 0.752 & 0.899 & 0.100 \\
 $1.6$ & 0.849 & 0.961 & 0.103 \\
\bottomrule
\end{tabular}
\end{table}

\subsection{Additional settings: number of studies, residual correlation and shape}\label{sapp:krl}

Table~\ref{tab:sim_krholambda} gives the three procedures in two one-factor comparisons around the baseline: the number of studies $k\in\{10,20,30,50\}$ at $\rho=0.4$, and the residual correlation $\rho\in\{0,0.4,0.8\}$ at $k=30$, each at $\lambda\in\{1/2,1,2\}$ under the three scenarios of the baseline (no trend; threshold trend $\rho_s=0.4$; accuracy trend $\delta=0.5$), with $\delta/\sigma_\alpha$ for each $\rho$. The nine settings at $k=30$, $\rho=0.4$ are baseline settings of Table~\ref{tab:sim_baseline} (identical seeds and rates) and are shown once; the other 45 settings are additional.

\emph{Number of studies varied} ($\rho=0.4$; 36 settings). With no trend the Deeks test rose above its level as $k$ grew, from $0.102$ to $0.113$ at $k=10$ to $0.170$ to $0.192$ at $k=50$, while the binomial-fit $\lnDOR$ test ($0.098$ to $0.123$) and the proposed test ($0.096$ to $0.134$, its largest value at $k=10$) stayed near their level at every $k$. Under the threshold trend the rejection rates of the two $\lnDOR$-axis procedures on the asymmetric curves grew with $k$, from $0.123$ to $0.157$ at $k=10$ to $0.257$ to $0.300$ at $k=50$, and stayed at their no-trend values at $\lambda=1$, while the proposed test stayed within $0.090$ to $0.131$. Under the accuracy trend the power of every procedure grew with $k$: the proposed test from $0.440$ to $0.488$ at $k=10$ to $0.963$ to $0.973$ at $k=50$, at least $0.077$ above the Deeks test and within $0.033$ of the $\lnDOR$ fit at every $k$ (within $0.01$ at $k=10$).

\emph{Residual correlation varied} ($k=30$; 27 settings). With no trend the rates changed little with $\rho$: the Deeks test $0.137$ to $0.157$, the $\lnDOR$ fit $0.090$ to $0.120$ and the proposed test $0.085$ to $0.121$ over the nine settings. Under the threshold trend the rejection rates of the two $\lnDOR$-axis procedures on the asymmetric curves grew with $\rho$, from $0.183$ to $0.212$ at $\rho=0$ to $0.309$ to $0.393$ at $\rho=0.8$, while the proposed test stayed within $0.096$ to $0.128$. Under the accuracy trend the power of every procedure grew with $\rho$, the same change of the latent accuracy being $0.47$, $0.61$ and $1.05$ of its residual standard deviation: the proposed test from $0.705$ to $0.726$ at $\rho=0$ to $0.964$ to $0.983$ at $\rho=0.8$, at least $0.035$ above the Deeks test and within $0.049$ of the $\lnDOR$ fit in every setting. Across the 51 reported settings satisfying the accuracy null, the proposed test rejected in $0.085$ to $0.134$ of the replicates (Tables~\ref{tab:sim_baseline} to \ref{tab:sim_krholambda}).

\begin{table}[p]
\centering
\caption{Rejection rates at the nominal level $0.10$ when $k$ varies at $\rho=0.4$ and when $\rho$ varies at $k=30$. Each comparison uses $\lambda=1/2$, $1$ and $2$ under neither trend, a threshold trend alone ($\rho_s=0.4$) and an accuracy trend alone ($\delta=0.5$), with 1000 replicates per setting. The shared reference rows are shown once. $\delta/\sigma_\alpha$ gives the accuracy change in residual standard deviations. D: Deeks; F: binomial-fit $\lnDOR$; P: proposed accuracy test. Auxiliary statistics and diagnostics are in Tables~\ref{tab:sim_checks} to \ref{tab:sim_checks_d}.}
\label{tab:sim_krholambda}
\scriptsize
\setlength{\tabcolsep}{5pt}
\begin{tabular}{@{}ccrc ccc ccc ccc@{}}
\toprule
 & & & & \multicolumn{3}{c}{No trend} & \multicolumn{3}{c}{Threshold trend ($\rho_s=0.4$)} & \multicolumn{3}{c}{Accuracy trend ($\delta=0.5$)} \\
\cmidrule(lr){5-7}\cmidrule(lr){8-10}\cmidrule(lr){11-13}
$k$ & $\rho$ & $\lambda$ & $\delta/\sigma_\alpha$ & D & F & P & D & F & P & D & F & P \\
\midrule
\multicolumn{13}{@{}l}{\emph{Number of studies varied at $\rho=0.4$}} \\
\addlinespace[2pt]
 10 & 0.4 & $1/2$ & 0.61 & 0.113 & 0.117 & 0.134 & 0.124 & 0.157 & 0.131 & 0.360 & 0.443 & 0.443 \\
  &  & $1$ &  & 0.111 & 0.112 & 0.113 & 0.122 & 0.123 & 0.117 & 0.411 & 0.496 & 0.488 \\
  &  & $2$ &  & 0.102 & 0.110 & 0.123 & 0.123 & 0.138 & 0.105 & 0.356 & 0.437 & 0.440 \\
\addlinespace[2pt]
 20 & 0.4 & $1/2$ & 0.61 & 0.143 & 0.110 & 0.121 & 0.209 & 0.192 & 0.112 & 0.515 & 0.677 & 0.710 \\
  &  & $1$ &  & 0.139 & 0.109 & 0.118 & 0.117 & 0.100 & 0.102 & 0.602 & 0.748 & 0.755 \\
  &  & $2$ &  & 0.144 & 0.123 & 0.113 & 0.190 & 0.194 & 0.119 & 0.498 & 0.663 & 0.690 \\
\addlinespace[2pt]
 30 & 0.4 & $1/2$ & 0.61 & 0.156 & 0.102 & 0.101 & 0.240 & 0.263 & 0.115 & 0.635 & 0.809 & 0.841 \\
  &  & $1$ &  & 0.157 & 0.120 & 0.117 & 0.136 & 0.099 & 0.098 & 0.702 & 0.872 & 0.866 \\
  &  & $2$ &  & 0.137 & 0.111 & 0.121 & 0.238 & 0.229 & 0.115 & 0.664 & 0.816 & 0.841 \\
\addlinespace[2pt]
 50 & 0.4 & $1/2$ & 0.61 & 0.192 & 0.104 & 0.105 & 0.257 & 0.291 & 0.105 & 0.809 & 0.955 & 0.965 \\
  &  & $1$ &  & 0.184 & 0.098 & 0.096 & 0.180 & 0.097 & 0.090 & 0.844 & 0.971 & 0.973 \\
  &  & $2$ &  & 0.170 & 0.107 & 0.125 & 0.284 & 0.300 & 0.093 & 0.798 & 0.944 & 0.963 \\
\addlinespace[5pt]
\multicolumn{13}{@{}l}{\emph{Residual correlation varied at $k=30$ (the reference rows $\rho=0.4$ are those of $k=30$ above)}} \\
\addlinespace[2pt]
 30 & 0 & $1/2$ & 0.47 & 0.145 & 0.117 & 0.107 & 0.183 & 0.184 & 0.102 & 0.532 & 0.713 & 0.715 \\
  &  & $1$ &  & 0.155 & 0.115 & 0.110 & 0.152 & 0.123 & 0.128 & 0.572 & 0.719 & 0.726 \\
  &  & $2$ &  & 0.149 & 0.090 & 0.085 & 0.212 & 0.190 & 0.105 & 0.522 & 0.698 & 0.705 \\
\addlinespace[2pt]
 30 & 0.8 & $1/2$ & 1.05 & 0.143 & 0.112 & 0.100 & 0.341 & 0.393 & 0.096 & 0.822 & 0.948 & 0.964 \\
  &  & $1$ &  & 0.152 & 0.108 & 0.108 & 0.142 & 0.116 & 0.112 & 0.948 & 0.985 & 0.983 \\
  &  & $2$ &  & 0.151 & 0.117 & 0.106 & 0.309 & 0.362 & 0.128 & 0.809 & 0.933 & 0.982 \\
\bottomrule
\end{tabular}
\end{table}

\subsection{Auxiliary statistics and computational diagnostics}\label{sapp:checks}

Tables~\ref{tab:sim_checks} to \ref{tab:sim_checks_d} collect, for every reported setting, the auxiliary statistics computed on the same fits and the computational diagnostics (the 35 baseline settings in Table~\ref{tab:sim_checks}; the 45 additional settings of Table~\ref{tab:sim_krholambda} in Tables~\ref{tab:sim_checks_b} to \ref{tab:sim_checks_d}). Wald$_t$ is the Wald contrast of the accuracy coordinate $\hat\beta_\eta-\hat\lambda\hat\beta_\varphi$ (Section~\ref{sapp:delta}) referred to $t_{k-2}$, Proposed$_{\chi^2}$ the proposed test with its likelihood-ratio statistic referred to $\chi^2_1$ instead of $t_{k-2}$. The diagnostics count, out of 1000 replicates, the fits in which the optimizer reported non-convergence (full fit; constrained fit), the full fits with a parameter at a bound of the optimizer ($\log\sigma$ at $-4$ or $3$, $\operatorname{artanh}\rho$ at $\pm4$), the replicates in which at least one $p$ value was invalid (a non-finite value, or a Wald variance that was not finite and positive), and the replicates in which the likelihood ratio was negative beyond rounding ($<-10^{-6}$). Over the 80 reported settings no replicate produced an invalid $p$ value for the three procedures or the auxiliary statistics, the numerical Hessian of the full fit was invertible in every replicate, the optimizer reported non-convergence in at most $2$ full fits and $4$ constrained fits per 1000, and no likelihood-ratio statistic below $-10^{-6}$ was recorded. Parameters at a bound of the optimizer are almost always $\operatorname{artanh}\hat\rho$ at $\pm4$, that is $|\hat\rho|\ge0.9993$ ($97\%$ of the bound hits in the reported settings): they occur in $3.7$ to $7.8\%$ of the full fits at $k=10$, in at most $0.2\%$ at $k=20$, $0.7\%$ at $k=30$ (the largest values at $\rho=0.8$) and $0.0\%$ at $k=50$; the rates of Tables~\ref{tab:sim_baseline} to \ref{tab:sim_krholambda} include these replicates. Referring the likelihood ratio to $\chi^2_1$ instead of $t_{k-2}$ raises the no-trend rejection rate by $0.037$ to $0.048$ at $k=10$, $0.011$ to $0.019$ at $k=20$, $0.006$ to $0.017$ at $k=30$ and $0.005$ to $0.011$ at $k=50$: the $t$ convention of the main text matters at small $k$ and little at $k=50$. The Wald contrast of the accuracy coordinate agrees with the likelihood-ratio form to within $0.022$ with no trend, $0.011$ under the threshold trends, $0.035$ under the accuracy trend alone (the largest gaps at $k=10$) and $0.017$ with both trends present at the baseline.

\begin{table}[p]
\centering
\caption{Auxiliary statistics and computational diagnostics for the 35 baseline settings ($k=30$, $\rho=0.4$; 1000 replicates per setting): the shape comparisons under no trend, a threshold trend and an accuracy trend, the further strengths of the threshold trend, and the shape comparison with both trends present. The columns report rejection rates at the nominal level $0.10$ of the accuracy-coordinate Wald test with the $t_{k-2}$ reference (Wald$_t$) and of the proposed likelihood-ratio statistic with the $\chi^2_1$ reference (Proposed$_{\chi^2}$), and the numerical diagnostics, counts out of 1000: NC, non-convergence reported by the optimizer (full fit / constrained fit); B, a parameter of the full fit at a bound; I, at least one invalid $p$ value; L, a negative likelihood ratio beyond rounding. Settings are identified by $\lambda$, $\rho_s$ and $\delta$.}
\label{tab:sim_checks}
\scriptsize
\setlength{\tabcolsep}{4pt}
\begin{tabular}{@{}ccc cc cccc@{}}
\toprule
$\lambda$ & $\rho_s$ & $\delta$ & Wald$_t$ & Proposed$_{\chi^2}$ & NC & B & I & L \\
\midrule
 $1/4$ & 0 & 0 & 0.120 & 0.130 & 0/2 & 0 & 0 & 0 \\
 $1/4$ & 0.4 & 0 & 0.100 & 0.114 & 0/0 & 0 & 0 & 0 \\
 $1/4$ & 0 & 0.5 & 0.778 & 0.805 & 0/1 & 1 & 0 & 0 \\
 $1/2$ & 0 & 0 & 0.109 & 0.115 & 0/0 & 0 & 0 & 0 \\
 $1/2$ & 0.4 & 0 & 0.114 & 0.131 & 0/2 & 0 & 0 & 0 \\
 $1/2$ & 0 & 0.5 & 0.836 & 0.851 & 0/1 & 0 & 0 & 0 \\
 $1/\sqrt2$ & 0 & 0 & 0.110 & 0.118 & 0/3 & 0 & 0 & 0 \\
 $1/\sqrt2$ & 0.4 & 0 & 0.116 & 0.121 & 0/2 & 0 & 0 & 0 \\
 $1/\sqrt2$ & 0 & 0.5 & 0.854 & 0.877 & 0/0 & 0 & 0 & 0 \\
 $1$ & 0 & 0 & 0.116 & 0.130 & 0/2 & 0 & 0 & 0 \\
 $1$ & 0.4 & 0 & 0.100 & 0.111 & 0/1 & 0 & 0 & 0 \\
 $1$ & 0 & 0.5 & 0.866 & 0.879 & 0/1 & 0 & 0 & 0 \\
 $\sqrt2$ & 0 & 0 & 0.105 & 0.116 & 0/0 & 0 & 0 & 0 \\
 $\sqrt2$ & 0.4 & 0 & 0.104 & 0.112 & 0/0 & 0 & 0 & 0 \\
 $\sqrt2$ & 0 & 0.5 & 0.847 & 0.862 & 0/1 & 0 & 0 & 0 \\
 $2$ & 0 & 0 & 0.120 & 0.137 & 0/0 & 0 & 0 & 0 \\
 $2$ & 0.4 & 0 & 0.113 & 0.128 & 0/3 & 0 & 0 & 0 \\
 $2$ & 0 & 0.5 & 0.841 & 0.854 & 0/1 & 0 & 0 & 0 \\
 $4$ & 0 & 0 & 0.114 & 0.135 & 0/4 & 1 & 0 & 0 \\
 $4$ & 0.4 & 0 & 0.124 & 0.143 & 0/0 & 3 & 0 & 0 \\
 $4$ & 0 & 0.5 & 0.775 & 0.796 & 0/1 & 0 & 0 & 0 \\
\addlinespace[2pt]
 $1/2$ & 0.2 & 0 & 0.098 & 0.110 & 0/0 & 0 & 0 & 0 \\
 $1/2$ & 0.6 & 0 & 0.113 & 0.131 & 0/2 & 0 & 0 & 0 \\
 $1/2$ & 0.8 & 0 & 0.111 & 0.127 & 0/0 & 1 & 0 & 0 \\
 $1/2$ & 1 & 0 & 0.106 & 0.114 & 0/1 & 0 & 0 & 0 \\
 $1/2$ & 1.2 & 0 & 0.115 & 0.125 & 1/0 & 0 & 0 & 0 \\
 $1/2$ & 1.4 & 0 & 0.104 & 0.116 & 0/0 & 0 & 0 & 0 \\
 $1/2$ & 1.6 & 0 & 0.106 & 0.113 & 1/1 & 0 & 0 & 0 \\
\addlinespace[2pt]
 $1/4$ & 0.4 & 0.5 & 0.785 & 0.792 & 0/0 & 1 & 0 & 0 \\
 $1/2$ & 0.4 & 0.5 & 0.840 & 0.844 & 0/1 & 0 & 0 & 0 \\
 $1/\sqrt2$ & 0.4 & 0.5 & 0.841 & 0.846 & 0/1 & 0 & 0 & 0 \\
 $1$ & 0.4 & 0.5 & 0.849 & 0.849 & 0/2 & 0 & 0 & 0 \\
 $\sqrt2$ & 0.4 & 0.5 & 0.848 & 0.849 & 2/1 & 0 & 0 & 0 \\
 $2$ & 0.4 & 0.5 & 0.834 & 0.834 & 0/3 & 0 & 0 & 0 \\
 $4$ & 0.4 & 0.5 & 0.793 & 0.795 & 0/0 & 1 & 0 & 0 \\
\bottomrule
\end{tabular}
\end{table}

\begin{table}[p]
\centering
\caption{Auxiliary statistics and computational diagnostics of the additional settings with $k=10$ and $\rho=0.4$ (1000 replicates per setting). Columns as in Table~\ref{tab:sim_checks}.}
\label{tab:sim_checks_b}
\scriptsize
\setlength{\tabcolsep}{4pt}
\begin{tabular}{@{}ccccc cc cccc@{}}
\toprule
$k$ & $\rho$ & $\lambda$ & $\rho_s$ & $\delta$ & Wald$_t$ & Proposed$_{\chi^2}$ & NC & B & I & L \\
\midrule
 10 & 0.4 & $1/2$ & 0 & 0 & 0.145 & 0.171 & 1/2 & 78 & 0 & 0 \\
 10 & 0.4 & $1/2$ & 0.4 & 0 & 0.140 & 0.180 & 0/0 & 59 & 0 & 0 \\
 10 & 0.4 & $1/2$ & 0 & 0.5 & 0.422 & 0.524 & 0/4 & 75 & 0 & 0 \\
 10 & 0.4 & $1$ & 0 & 0 & 0.111 & 0.153 & 1/0 & 48 & 0 & 0 \\
 10 & 0.4 & $1$ & 0.4 & 0 & 0.128 & 0.180 & 0/2 & 37 & 0 & 0 \\
 10 & 0.4 & $1$ & 0 & 0.5 & 0.466 & 0.574 & 0/4 & 50 & 0 & 0 \\
 10 & 0.4 & $2$ & 0 & 0 & 0.101 & 0.171 & 0/1 & 65 & 0 & 0 \\
 10 & 0.4 & $2$ & 0.4 & 0 & 0.097 & 0.147 & 0/1 & 48 & 0 & 0 \\
 10 & 0.4 & $2$ & 0 & 0.5 & 0.405 & 0.529 & 1/1 & 68 & 0 & 0 \\
\bottomrule
\end{tabular}
\end{table}

\begin{table}[p]
\centering
\caption{Auxiliary statistics and computational diagnostics of the additional settings with $k=20$ and $\rho=0.4$ (1000 replicates per setting). Columns as in Table~\ref{tab:sim_checks_b}.}
\label{tab:sim_checks_c}
\scriptsize
\setlength{\tabcolsep}{4pt}
\begin{tabular}{@{}ccccc cc cccc@{}}
\toprule
$k$ & $\rho$ & $\lambda$ & $\rho_s$ & $\delta$ & Wald$_t$ & Proposed$_{\chi^2}$ & NC & B & I & L \\
\midrule
 20 & 0.4 & $1/2$ & 0 & 0 & 0.125 & 0.138 & 1/2 & 2 & 0 & 0 \\
 20 & 0.4 & $1/2$ & 0.4 & 0 & 0.116 & 0.128 & 0/0 & 0 & 0 & 0 \\
 20 & 0.4 & $1/2$ & 0 & 0.5 & 0.696 & 0.738 & 1/1 & 0 & 0 & 0 \\
 20 & 0.4 & $1$ & 0 & 0 & 0.116 & 0.129 & 0/3 & 0 & 0 & 0 \\
 20 & 0.4 & $1$ & 0.4 & 0 & 0.107 & 0.120 & 0/1 & 0 & 0 & 0 \\
 20 & 0.4 & $1$ & 0 & 0.5 & 0.749 & 0.780 & 0/1 & 0 & 0 & 0 \\
 20 & 0.4 & $2$ & 0 & 0 & 0.107 & 0.132 & 1/0 & 1 & 0 & 0 \\
 20 & 0.4 & $2$ & 0.4 & 0 & 0.118 & 0.145 & 0/0 & 2 & 0 & 0 \\
 20 & 0.4 & $2$ & 0 & 0.5 & 0.684 & 0.726 & 0/2 & 2 & 0 & 0 \\
\bottomrule
\end{tabular}
\end{table}

\begin{table}[p]
\centering
\caption{Auxiliary statistics and computational diagnostics of the additional settings at $k=30$ with $\rho=0$ or $0.8$, and at $k=50$ with $\rho=0.4$ (1000 replicates per setting). Columns as in Table~\ref{tab:sim_checks_b}.}
\label{tab:sim_checks_d}
\scriptsize
\setlength{\tabcolsep}{4pt}
\begin{tabular}{@{}ccccc cc cccc@{}}
\toprule
$k$ & $\rho$ & $\lambda$ & $\rho_s$ & $\delta$ & Wald$_t$ & Proposed$_{\chi^2}$ & NC & B & I & L \\
\midrule
 30 & 0 & $1/2$ & 0 & 0 & 0.110 & 0.124 & 0/1 & 0 & 0 & 0 \\
 30 & 0 & $1/2$ & 0.4 & 0 & 0.106 & 0.110 & 0/1 & 0 & 0 & 0 \\
 30 & 0 & $1/2$ & 0 & 0.5 & 0.713 & 0.738 & 0/1 & 0 & 0 & 0 \\
 30 & 0 & $1$ & 0 & 0 & 0.110 & 0.126 & 1/3 & 0 & 0 & 0 \\
 30 & 0 & $1$ & 0.4 & 0 & 0.126 & 0.136 & 0/0 & 0 & 0 & 0 \\
 30 & 0 & $1$ & 0 & 0.5 & 0.721 & 0.746 & 0/0 & 0 & 0 & 0 \\
 30 & 0 & $2$ & 0 & 0 & 0.085 & 0.096 & 1/4 & 0 & 0 & 0 \\
 30 & 0 & $2$ & 0.4 & 0 & 0.108 & 0.114 & 0/1 & 0 & 0 & 0 \\
 30 & 0 & $2$ & 0 & 0.5 & 0.699 & 0.731 & 0/0 & 0 & 0 & 0 \\
\addlinespace[2pt]
 30 & 0.8 & $1/2$ & 0 & 0 & 0.108 & 0.106 & 0/2 & 7 & 0 & 0 \\
 30 & 0.8 & $1/2$ & 0.4 & 0 & 0.099 & 0.109 & 1/3 & 6 & 0 & 0 \\
 30 & 0.8 & $1/2$ & 0 & 0.5 & 0.963 & 0.967 & 0/0 & 3 & 0 & 0 \\
 30 & 0.8 & $1$ & 0 & 0 & 0.110 & 0.124 & 0/1 & 0 & 0 & 0 \\
 30 & 0.8 & $1$ & 0.4 & 0 & 0.116 & 0.122 & 0/0 & 0 & 0 & 0 \\
 30 & 0.8 & $1$ & 0 & 0.5 & 0.982 & 0.986 & 0/0 & 0 & 0 & 0 \\
 30 & 0.8 & $2$ & 0 & 0 & 0.110 & 0.118 & 1/0 & 6 & 0 & 0 \\
 30 & 0.8 & $2$ & 0.4 & 0 & 0.123 & 0.138 & 0/1 & 4 & 0 & 0 \\
 30 & 0.8 & $2$ & 0 & 0.5 & 0.982 & 0.984 & 0/2 & 3 & 0 & 0 \\
\addlinespace[4pt]
 50 & 0.4 & $1/2$ & 0 & 0 & 0.109 & 0.115 & 0/0 & 0 & 0 & 0 \\
 50 & 0.4 & $1/2$ & 0.4 & 0 & 0.104 & 0.113 & 0/0 & 0 & 0 & 0 \\
 50 & 0.4 & $1/2$ & 0 & 0.5 & 0.965 & 0.965 & 1/1 & 0 & 0 & 0 \\
 50 & 0.4 & $1$ & 0 & 0 & 0.098 & 0.107 & 0/0 & 0 & 0 & 0 \\
 50 & 0.4 & $1$ & 0.4 & 0 & 0.086 & 0.092 & 0/0 & 0 & 0 & 0 \\
 50 & 0.4 & $1$ & 0 & 0.5 & 0.973 & 0.974 & 0/0 & 0 & 0 & 0 \\
 50 & 0.4 & $2$ & 0 & 0 & 0.121 & 0.130 & 0/2 & 0 & 0 & 0 \\
 50 & 0.4 & $2$ & 0.4 & 0 & 0.095 & 0.098 & 0/0 & 0 & 0 & 0 \\
 50 & 0.4 & $2$ & 0 & 0.5 & 0.965 & 0.968 & 0/2 & 0 & 0 & 0 \\
\bottomrule
\end{tabular}
\end{table}

\section{Data and sensitivity analyses}\label{sapp:data}

\subsection{The FIT review}\label{sapp:fit}

The Cochrane review of \citet{grobbee2022} compares guaiac-based faecal occult blood tests with faecal immunochemical tests (FIT) for colorectal cancer in average-risk screening populations and separates the studies in which every participant underwent the reference standard (``reference standard: all'') from those in which only test-positive participants underwent colonoscopy while test-negative participants were followed for at least one year for interval cancers (``reference standard: positive''). The main text uses the review's FIT analysis of the second group for the target condition colorectal cancer, whose $2\times2$ tables are the published counts of the review's own analysis (its RevMan data file, analysis ``Reference standard: positive, FIT, CRC''). The review contains many analyses (test, reference-standard design, target condition and subgroups); this one was chosen because it has the largest studies, the widest range of sizes and the fewest empty cells of the review's FIT analyses, and the review itself was identified in a screen of Cochrane DTA reviews with published data files for analyses with at least 20 entries and an asymmetric fitted curve. Table~\ref{tab:fit_counts} lists the entries with the country, the positivity threshold and the follow-up of test negatives as the review's characteristics tables report them. Twenty-two studies contribute 23 entries: Chiang 2014 evaluated two FIT brands in different parts of the Taiwanese programme (747,076 and 208,929 participants). These are the only two entries from the same study and were analysed as separate units, as in the source review. Denters 2012a and Levi 2011b are the FIT arms of studies that randomized participants between a guaiac test and a FIT. The observed false-positive rates of the entries range from $2.0\%$ to $11.5\%$ (Table~\ref{tab:fit_counts}), the range of false-positive rates over which the fitted curve is supported by data. Four entries have no false negatives. In these screening populations the prevalence of cancer is below one per cent, so $\ESS_i=4n_{1i}n_{0i}/(n_{1i}+n_{0i})$ is close to four times the number of cancers and $s_i=1/\sqrt{\ESS_i}$ orders the entries by that number: the seven smallest entries have 6 to 16 cancers and the seven largest 385 to 1,578. The false-positive rate averaged 7.3\% among the seven smallest entries and 4.4\% among the seven largest, with mean sensitivities of 87\% and 85\%, respectively. The Cochrane Library's terms for downloaded data grant a non-commercial licence to extract, copy and share the data with attribution.

\begin{table}[p]
\centering
\caption{The 23 entries of the FIT analysis (reference standard: positive; target condition colorectal cancer) of \citet{grobbee2022}, with the country, positivity threshold and follow-up of test-negative participants transcribed from the review's characteristics tables, and the counts of the review's analysis (TP, FN among cancers; FP, TN among participants without cancer). Thresholds in $\mu$g Hb/g faeces where the review states or converts them, with the manufacturer's buffer concentration in parentheses where the review gives it (ng/mL, or $\mu$g/L, equivalently); qualitative tests have no quantitative threshold. Follow-up of test-negative participants in years. Study labels are the review's.}
\label{tab:fit_counts}
\scriptsize
\setlength{\tabcolsep}{3pt}
\begin{tabular}{llllrrrr}
\toprule
Entry & Country & Threshold ($\mu$g Hb/g) & Follow-up & TP & FN & FP & TN \\
\midrule
Arana-Arri 2017a & Spain & 20 & 2 y & 1032 & 136 & 14831 & 279546 \\
Burón 2019 & Spain & 20 & 2 y & 251 & 47 & 3931 & 74901 \\
Castiglione 2007 & Italy & 20 (100 ng/mL) & 1--2 y & 65 & 16 & 894 & 26390 \\
Chen 2016a & Taiwan & 20 (100 ng/mL) & 2 y & 712 & 51 & 7871 & 132411 \\
Chiang 2014a & Taiwan & 20 & $>$3 y & 1197 & 381 & 21539 & 723959 \\
Chiang 2014b & Taiwan & 20 & $>$3 y & 284 & 101 & 6639 & 201905 \\
Crotta 2012 & Italy & 20 (100 ng/mL) & 2 y & 5 & 3 & 62 & 1585 \\
Denters 2012a & Netherlands & not stated & 2 y & 12 & 4 & 174 & 2634 \\
Digby 2016 & Scotland & 80 & 2 y & 30 & 31 & 619 & 30109 \\
Itoh 1996 & Japan & 10 (50 ng/mL) & 2 y & 77 & 12 & 1130 & 26358 \\
Jensen 2016 & USA & 20 & 1 y & 514 & 100 & 11598 & 307212 \\
Juul 2018 & Denmark & 20 (100 $\mu$g/L) & 1 y & 907 & 60 & 13955 & 229093 \\
Kapidzic 2017 & Netherlands & 10 & $\ge$2 y & 13 & 0 & 213 & 1636 \\
Launoy 2005 & France & 67 (20 ng/mL) & $\ge$1 y & 22 & 4 & 344 & 6985 \\
Levi 2011b & Israel & 14 (70 ng/mL) & 2 y & 6 & 0 & 102 & 1071 \\
McNamara 2014 & Ireland & 20 (100 ng/mL) & 2 y & 16 & 1 & 402 & 4549 \\
Nakama 1996 & Japan & not stated (qualitative) & 3 y & 10 & 4 & 147 & 3204 \\
Parente 2013 & Italy & 250 (100 ng/mL) & $\ge$2 y & 95 & 8 & 1957 & 36293 \\
Parra-Blanco 2010 & Spain & 50 ng/mL (qualitative) & 2 y & 14 & 0 & 129 & 1573 \\
Robinson 1996 & England & 1:8 dilution (qualitative) & not stated & 10 & 0 & 135 & 1344 \\
Sieg 2002 & Germany & 10; 5 from 1998 & 1--4 y & 23 & 6 & 206 & 5053 \\
Van Roon 2013 & Netherlands & 10 (50 ng/mL) & 2 y & 22 & 2 & 338 & 4141 \\
Zorzi 2018 & Italy & 20 (100 ng/mL) & 2 y & 412 & 51 & 6007 & 116281 \\
\bottomrule
\end{tabular}
\end{table}

\emph{Leave-one-out fits.} Omitting one entry at a time and refitting gives a shape $\hat\lambda$ between $2.00$ and $2.20$, a latent accuracy trend between $-4.2$ and $-2.9$ with likelihood-ratio $p$ between $0.015$ and $0.093$ (below $0.10$ in all 23 fits), a latent threshold trend between $+3.0$ and $+5.4$, and a Deeks $p$ between $0.106$ and $0.562$ (above $0.10$ in all 23 fits); the binomial-fit $\lnDOR$ test has $p$ between $0.488$ and $0.955$. The largest single change comes from omitting Denters 2012a ($p=0.093$ for the accuracy test); omitting the largest entry (Chiang 2014a, 1,578 cancers) leaves the accuracy test at $p=0.060$ and the Deeks test at $p=0.106$. The high fitted residual correlation ($\hat\rho=+0.98$) corresponds to little residual variation in latent accuracy after accounting for the study-size trends. The estimated residual standard deviations are $\hat\sigma_\alpha=0.095$ for latent accuracy and $\hat\sigma_\theta=0.52$ for latent threshold ($\sigma_\alpha/\sigma_\theta=\{4(1-\rho)/(1+\rho)\}^{1/2}$); no parameter of the fit is at a bound of the optimizer (Table~\ref{tab:review_fits}).

\emph{Parametric check in the design of the review.} A parametric check evaluated the proposed test at the accuracy-null constrained fit of Section~\ref{sapp:lrt}. Using the observed group sizes of the 23 entries, $2{,}000$ replicate data sets were generated with $\gamma_\alpha=0$ and the remaining parameters at their constrained estimates ($\hat\gamma_\theta=5.5$, $\hat\lambda=2.14$ and $\hat\rho=0.979$). Every replicate was analysed with the same fitting and testing procedure as the review, re-estimating the full and accuracy-constrained models and leaving the threshold trend free in both (\texttt{analysis/fit\_level\_checks.py} in the accompanying reproducibility archive). At the nominal level $0.10$, the rejection rate was $0.095$ (Monte-Carlo SE $0.007$). Of the replicates, $5$ had non-convergence flags and $391$ had correlation estimates at the optimizer's bound ($\rho=0.9993$); these replicates were retained. This is a local numerical check of the implemented procedure at the specified generating point.

\subsection{The IPG review: reconstruction of the tables}\label{sapp:ipg}

The IPG source \citep{goodacre2006} reports each cohort's sensitivity and specificity as proportions rounded to two decimals with exact (Clopper--Pearson) 95\% intervals, but not the numbers of diseased and non-diseased participants. The two Prandoni 1991 cohorts are the derivation and validation cohorts from the same publication and are analysed separately, as in the source review.

We recover integer $2\times2$ tables by inversion: for every printed triple (proportion, lower limit, upper limit) we enumerate the integer pairs (count, group size) up to a group size of $4000$ whose proportion is reproduced within half-unit tolerance ($|x/n-p_{\mathrm{reported}}|\le0.005+10^{-9}$, robust to half-up versus banker's rounding) and whose exact interval matches the printed endpoints to the printed precision. For each sensitivity and specificity coordinate, we retain up to twelve admissible (count, group size) pairs, ordered by increasing group size and then count; this limit was binding for $4$ sensitivity and $11$ specificity coordinates. We then combine the retained sensitivity and specificity candidates, requiring an implied prevalence between $0.02$ and $0.95$. Within these restricted candidate sets, $30$ of the $42$ cohorts admit more than one table.

For each cohort, we select the combinations with the smallest and largest total sample sizes. We carry these two reconstructions through the \emph{entire} pipeline: model fit, the accuracy coordinate, all tests. The main text uses the minimum-count reconstruction throughout. Table~\ref{tab:review_fits} reports results for the two reconstructions examined. Larger group sizes than those retained would give the affected cohorts more weight than the minimum-count tables do.

Table~\ref{tab:review_fits} gives the fitted parameters of both reviews, at both IPG endpoints: the trend vector and the residual correlation, which Table~2 of the main text omits, the shape, and the tests, which it reports for the minimum-count endpoint. The estimates were similar at the two endpoints. The two IPG endpoints give $\hat\lambda=1.00$ and $0.99$, a latent accuracy trend of $-9.7$ and $-9.4$ ($p=0.023$ and $0.023$), a latent threshold trend of $+6.3$ and $+6.1$ ($p=0.013$ and $0.014$), a $\lnDOR$ trend of the fit with $p=0.010$ and $0.011$, and Deeks $p=0.012$ and $0.016$. All three tests gave the same conclusions at the two reconstruction endpoints. Full per-study reconstruction listings, both endpoints, and the reconstruction scripts are included in the accompanying reproducibility archive.

\begin{table}[t]
\centering
\caption{Fitted parameters and test results for FIT and for the minimum-count and maximum-count IPG reconstructions within the restricted candidate sets of Section~\ref{sapp:ipg}. Trends are per unit of $s_i=1/\sqrt{\ESS_i}$. Standard errors and test references follow Section~2.4 of the main text. Deeks results come from the empirical-logit weighted regression; model parameters and the other tests come from the binomial fit. The table also reports latent residual standard deviations and whether an optimizer bound was reached.}
\label{tab:review_fits}
\scriptsize
\setlength{\tabcolsep}{3pt}
\begin{tabular}{lccc}
\toprule
 & FIT (colorectal cancer) & \multicolumn{2}{c}{IPG (DVT)} \\
\cmidrule(lr){2-2}\cmidrule(lr){3-4}
 & published counts & min & max \\
\midrule
Entries $k$; entries with an empty cell & 23;\ 4 & 42;\ 2 & 42;\ 2 \\
\addlinespace
Trend vector $\hat\beta_\eta$ (SE) & $+2.9$\ (3.5) & $+1.5$\ (2.8) & $+1.4$\ (2.7) \\
\quad $\hat\beta_\varphi$ (SE) & $+3.9$\ (1.3) & $+11.1$\ (2.9) & $+10.8$\ (2.9) \\
Intercepts $\hat\mu_\eta$; $\hat\mu_\varphi$ & 1.91; -2.93 & 1.28; -1.91 & 1.28; -1.91 \\
Residual SDs $\hat\sigma_\eta$; $\hat\sigma_\varphi$ & 0.76; 0.36 & 0.69; 0.69 & 0.70; 0.70 \\
Residual correlation $\hat\rho$ & $+0.98$ & $+0.40$ & $+0.39$ \\
Latent SDs $\hat\sigma_\alpha$; $\hat\sigma_\theta$ & 0.095; 0.52 & 0.76; 0.58 & 0.77; 0.58 \\
Any parameter at a bound of the optimizer & no & no & no \\
Shape $\hat\bH$ (SE) & $-0.75$\ (0.09) & $+0.00$\ (0.21) & $+0.01$\ (0.21) \\
$\hat\lambda$ [95\% CI] & 2.11\ [1.73, 2.56] & 1.00\ [0.65, 1.53] & 0.99\ [0.65, 1.52] \\
\addlinespace
Proposed: latent accuracy trend $\hat\gamma_\alpha$ (SE); $p$ & $-3.7$\ (1.5); 0.029 & $-9.7$\ (3.9); 0.023 & $-9.4$\ (3.8); 0.023 \\
Latent threshold trend $\hat\gamma_\theta$ (SE); $p$ & $+3.9$\ (2.1); 0.081 & $+6.3$\ (2.3); 0.013 & $+6.1$\ (2.3); 0.014 \\
\addlinespace
$\lnDOR$ trend of the fit (SE); $p$ & $-1.0$\ (2.7); 0.710 & $-9.7$\ (3.6); 0.010 & $-9.4$\ (3.5); 0.011 \\
Deeks test: slope (SE); $p$ & $-3.5$\ (4.1); 0.406 & $-10.7$\ (4.1); 0.012 & $-10.2$\ (4.0); 0.016 \\
\bottomrule
\end{tabular}
\end{table}

\end{document}